\documentclass[11pt]{article}
\usepackage[margin=1in]{geometry}
\usepackage{cite,microtype,graphicx}
\usepackage{hyperref,aliascnt}
\usepackage{amsmath,amssymb,amsfonts,amsthm}

\newtheorem{theorem}{Theorem}

\newaliascnt{lemma}{theorem} \newtheorem{lemma}[lemma]{Lemma}
\aliascntresetthe{lemma} 

\newaliascnt{corollary}{theorem}

\aliascntresetthe{corollary}

\newaliascnt{definition}{theorem}

\aliascntresetthe{definition}

\newaliascnt{observation}{theorem}
\newtheorem{observation}[observation]{Observation}
\aliascntresetthe{observation}

\title{On Polyhedral Realization with Isosceles Triangles}

\author{David Eppstein\\
Computer Science Department, University of California, Irvine}

\date{ }

\begin{document}

\maketitle
\begin{abstract}
Answering a question posed by Joseph Malkevitch, we prove that there exists a polyhedral graph, with triangular faces, such that every realization of it as the graph of a convex polyhedron includes at least one face that is a scalene triangle. Our construction is based on Kleetopes, and shows that there exists an integer $i$ such that all convex $i$-iterated Kleetopes have a scalene face. However, we also show that all Kleetopes of triangulated polyhedral graphs have non-convex non-self-crossing realizations in which all faces are isosceles. We answer another question of Malkevitch by observing that a spherical tiling of Dawson (2005) leads to a fourth infinite family of convex polyhedra in which all faces are congruent isosceles triangles, adding one to the three families previously known to Malkevitch. We prove that the graphs of convex polyhedra with congruent isosceles faces have bounded diameter and have dominating sets of bounded size.
\end{abstract}

\section{Introduction}

By Steinitz's theorem, the graphs of three-dimensional convex polyhedra are exactly the 3-vertex-connected planar graphs~\cite{Ste-EMW-22}. The faces of the polyhedron are uniquely determined as the peripheral cycles of the graph: simple cycles such that every two edges that are not in the cycle can be connected by a path whose interior vertices are disjoint from the cycle~\cite{Tut-PLMS-63}. We call such a graph a \emph{polyhedral graph}, and if in addition all faces are triangles we call it a \emph{triangulation}. If $G$ is the graph of a polyhedron $P$, we call $P$ a \emph{realization} of $G$; here, $P$ is required only to be non-self-intersecting, not necessarily convex, but if it is convex we call $P$ a \emph{convex realization}. We may ask: what constraints are possible on the shapes of the faces of $P$~\cite{BarGru-PJM-70}?
 
\begin{figure}[t]
\centering\includegraphics[width=0.35\textwidth]{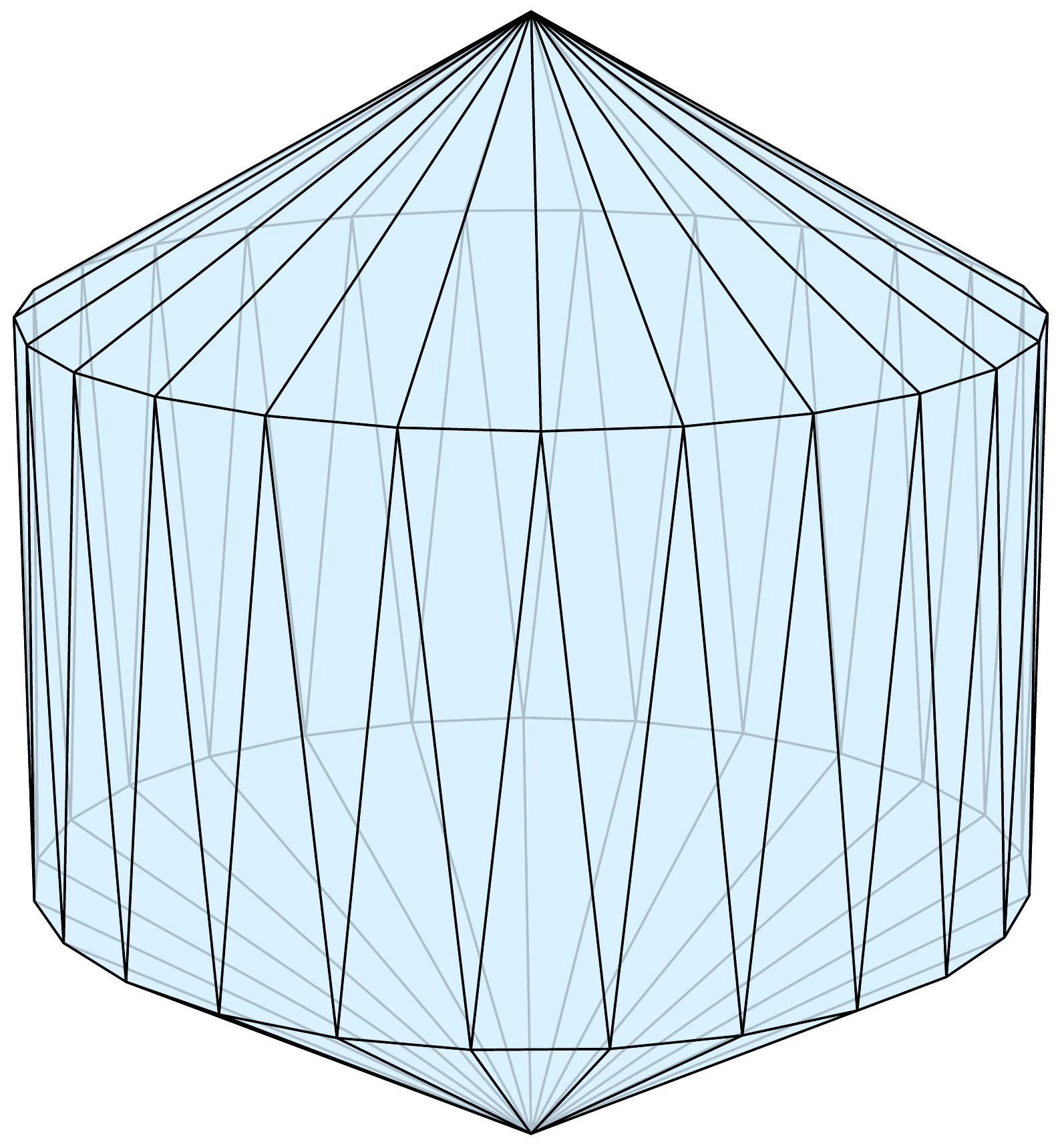}\qquad
\includegraphics[width=0.4\textwidth]{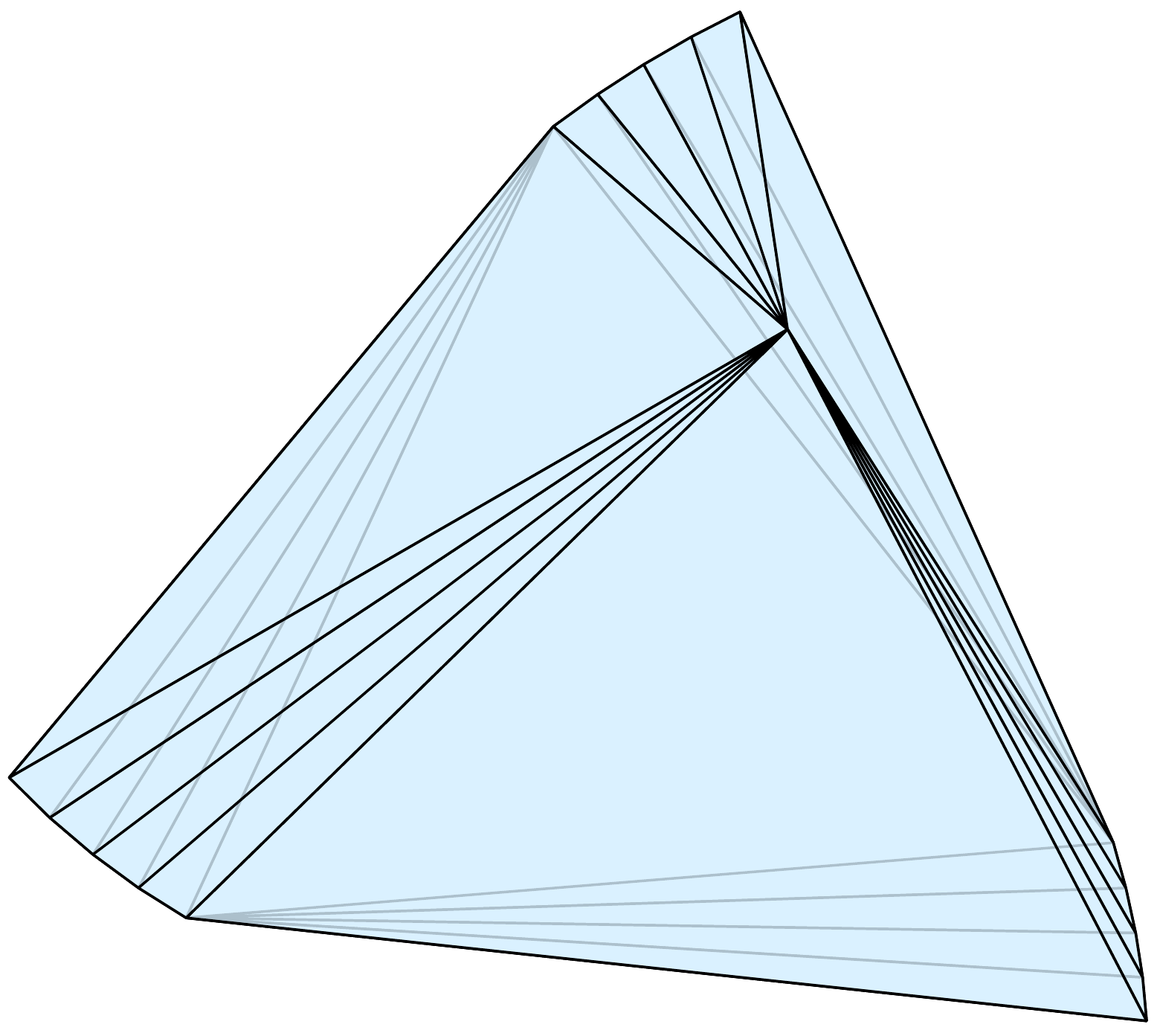}
\caption{Left: A polyhedron of Goldberg~\cite{Gol-AMM-36}, with 100 congruent isosceles triangle faces.
Right: A triangulation of Gr\"unbaum that cannot be realized with all faces congruent~\cite{Gru-Geomb-01}, realized as a convex polyhedron with edges of two lengths, and with isosceles and equilateral triangle faces.}
\label{fig:grunbaum}
\end{figure}

We call a realization or polyhedron \emph{isosceles} if all faces are isosceles triangles, and \emph{monohedral} if all faces are congruent.\footnote{This terminology should be distinguished from \emph{isohedral} polyhedra, in which every two faces are symmetric to each other.}  Monohedral isosceles convex polyhedra were investigated by Goldberg in 1936, who observed the existence of an infinite family of these polyhedra (beyond the obvious example of the bipyramids), obtained by gluing pyramids to the non-triangular faces of antiprisms~\cite{Gol-AMM-36} (\autoref{fig:grunbaum}, left). 
In 2001, Joseph Malkevitch described a triangulation that has no monohedral isosceles convex realization~\cite{Mal-Geomb-01}; in response, Branko Gr\"unbaum described a triangulation that has no monohedral realization, regardless of convexity of the realization and regardless of face shape~\cite{Gru-Geomb-01}. However, Gr\"unbaum's triangulation can be realized with all faces isosceles or equilatera ( \autoref{fig:grunbaum}, right). More recently, Malkevitch has asked again which triangulations have isosceles convex realizations, not required to be congruent, and whether all triangulations have such a realization~\cite{Mal-COMAP-19}.  Isosceles polyhedra have also been studied in connection with unfolding polyhedra into nets: there exist non-convex isosceles polyhedra with all faces acute that have no such unfolding, while it remains open whether all convex polyhedra have an unfolding~\cite{BerDemEpp-CGTA-03,DemDemEpp-CCCG-20}.

In this work, we provide the following results:
\begin{itemize}
\item We prove that some triangulations have no convex isosceles realization. We use the idea of a \emph{Kleetope}, a construction named after Victor Klee in which a polyhedron or polyhedral graph is modified by gluing a pyramid onto each face~\cite{Gru-IJM-63}. We show that repeating the Kleetope construction a bounded number of times, starting from any polyhedral graph, will produce a triangulation that has no convex isosceles realization.
\item We prove that the triangulations produced by this construction, and more generally all the Kleetopes of triangulations, have non-convex isosceles realizations.
\item We answer an open question posed by Malkevitch~\cite{Mal-COMAP-19} by finding a fourth infinite family of monohedral isosceles convex polyhedra, different from three previously known infinite families.
\item We investigate the structure of monohedral isosceles convex polyhedra with many faces, proving that they must have very sharp apex angles, bounded graph diameter, and bounded dominating set size.
\end{itemize}

\section{Preliminaries}

\subsection{Kleetopes and shortness}
\begin{figure}[t]
\centering\includegraphics[width=0.4\textwidth]{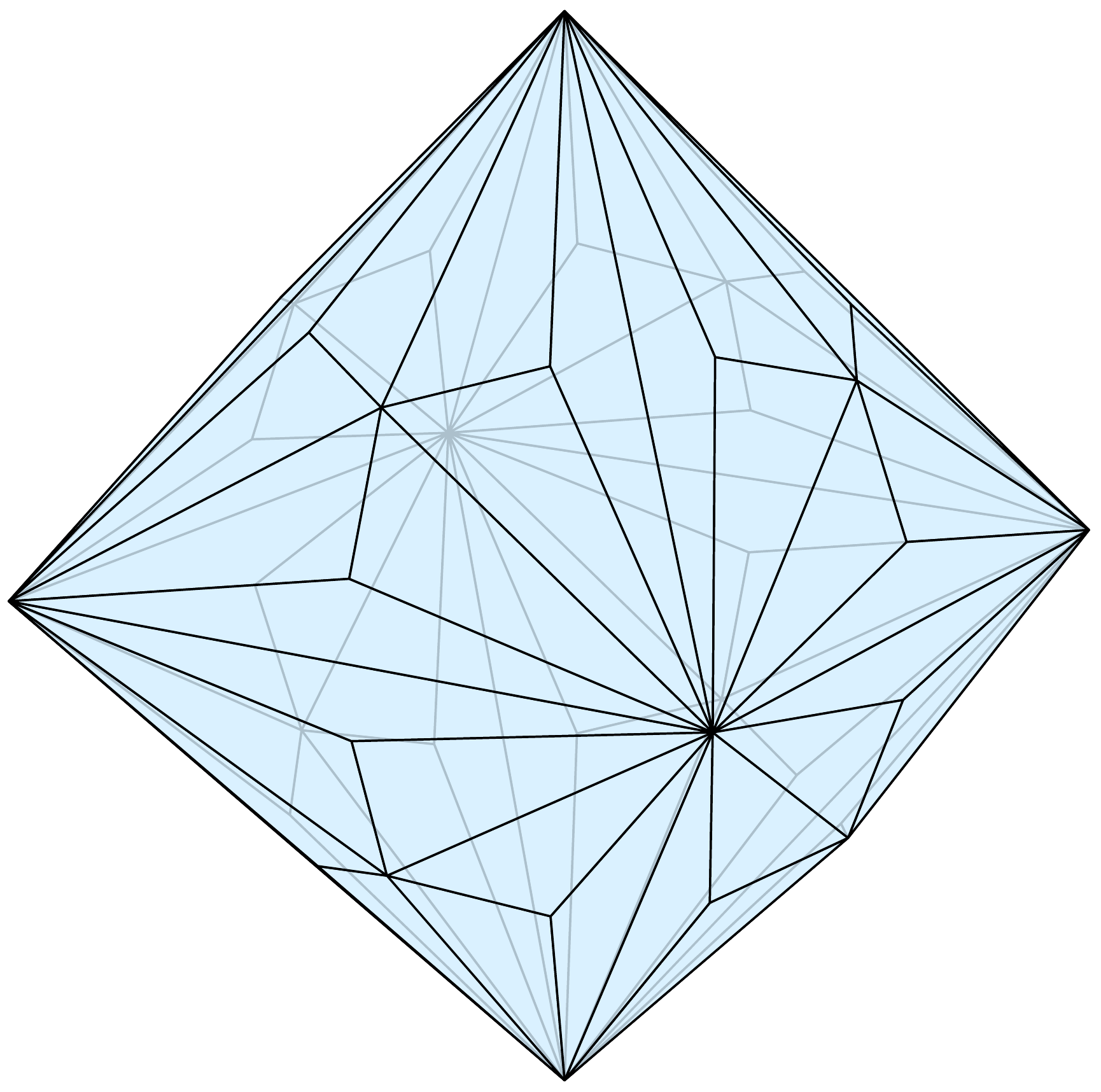}\qquad\includegraphics[width=0.4\textwidth]{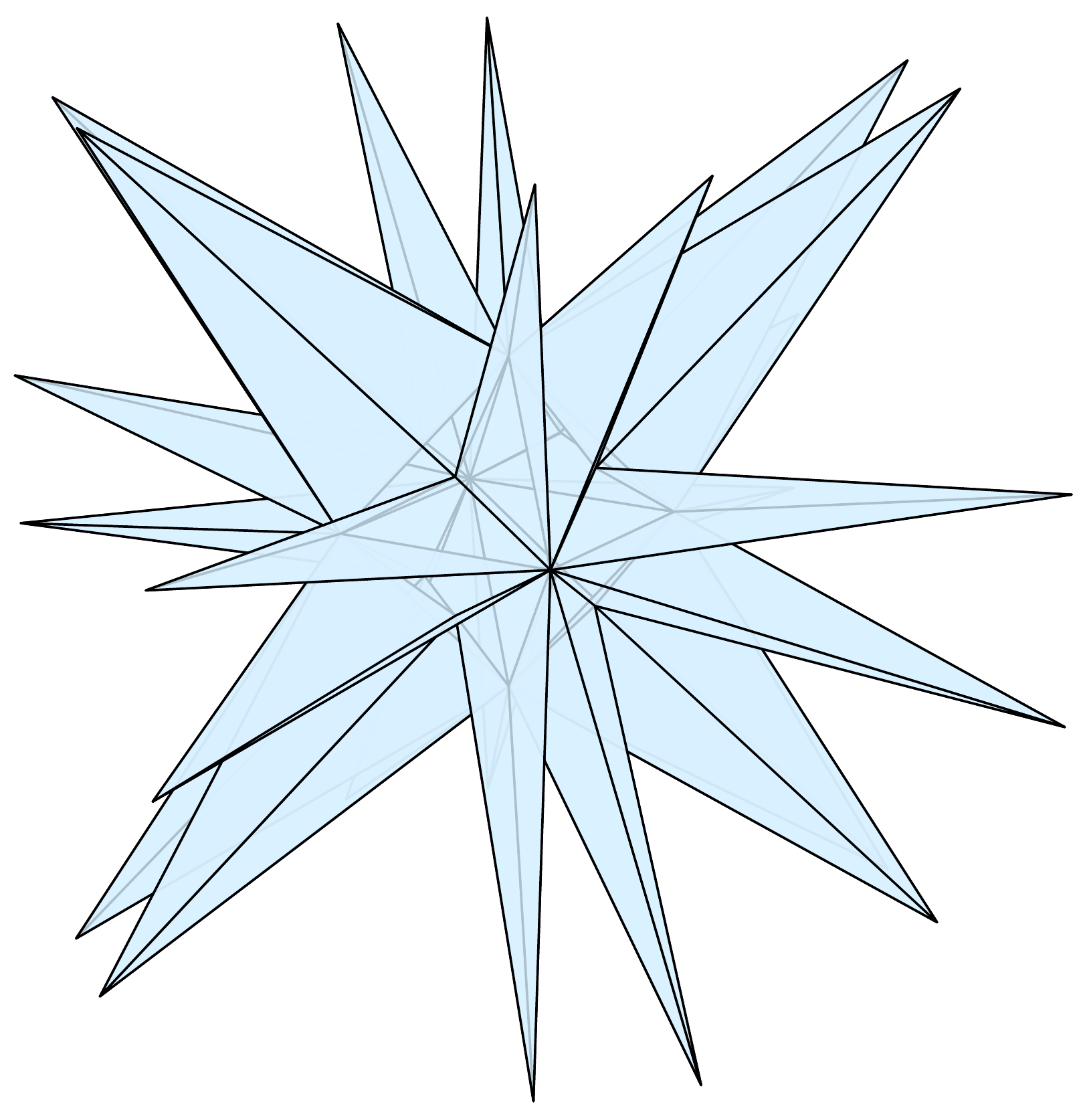}
\caption{Left: Iterated Kleetope $K^2 O$ of an octahedron $O$. Right: Its non-convex isosceles realization.}
\label{fig:kkoct}
\end{figure}

If $G$ is a polyhedral graph, then its \emph{Kleetope} is another polyhedral graph, obtained by adding to~$G$ a vertex for each face of~$P$, adjacent to every vertex of the face.  We denote the Kleetope by $KG$, and we denote the result of repeating the Kleetope operation $i$ times by $K^iG$. For instance, \autoref{fig:kkoct} (left) shows a convex realization of $K^2 O$, where $O=K_{2,2,2}$ is the graph of an octahedron. Geometrically, a realization of a Kleetope can be obtained from a realization of $G$ by attaching a pyramid to each face of $G$. When $G$ is a triangulation, every convex realization of $KG$ leads to a convex realization of $G$, obtained by removing the added vertices of $KG$. (In the non-convex case, this removal may lead to self-intersections. The property that $G$ is a triangulation and not just an arbitrary polyhedral graph is needed to ensure that each pyramid has a flat base.)

One of the earliest applications of Kleetopes was in the construction of triangulations in which all paths or cycles are short. If $G$ is a triangulation with $n$ vertices, then it has $2n-4$ triangles and $KG$ has $3n-4$ vertices. However, in a simple cycle of $KG$, no two vertices of $KG\setminus G$ can be adjacent, and replacing each such vertex by the edge between its two neighbors in $G$ produces a simple cycle of $G$. Therefore, if the longest simple cycle in $G$ has length $c$, then the longest simple cycle in $KG$ has length $2c$. Thus, the cycle length only increases by a factor of two, while the number of vertices increases by a factor converging (for large $n$) to three. Iterating this process produces $n$-vertex triangulations in which the longest path or cycle has length $O(n^{\log_3 2})$~\cite{MooMos-PJM-63}, and iterating it a bounded number of times produces the following result in the form that we need it:

\begin{lemma}
\label{lem:short}
For every $\varepsilon$ there is an $i$ such that, for all polyhedral graphs $G$, the longest simple cycle in $K^i G$
includes a fraction of the vertices of  $K^i G$ that is less than~$\varepsilon$.
\end{lemma}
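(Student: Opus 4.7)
\medskip\noindent\textbf{Proof plan.}
The plan is to iterate the bound sketched in the discussion preceding the lemma. Recall from that discussion that if $G$ is a triangulation with $n \ge 4$ vertices and longest simple cycle of length $c$, then $KG$ is a triangulation with $3n-4$ vertices and longest simple cycle of length at most $2c$. Thus the fraction $f = c/n$ evolves by
\[
f' \;\le\; \frac{2c}{3n-4} \;=\; \frac{2n}{3n-4}\cdot f,
\]
and the prefactor tends to $2/3$ as $n \to \infty$. My goal is to show that this recurrence drives $f$ below any prescribed $\varepsilon > 0$ after a bounded number of iterations, uniformly in~$G$.

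The first point to check is that $KG$ is always a triangulation, since each face of $G$ is subdivided by the added apex pyramid. Thus even when the input $G$ is an arbitrary polyhedral graph, every iterate $K^k G$ with $k \ge 1$ is a triangulation and the recurrence above applies from $k = 1$ onward. Since every polyhedral graph has at least $4$ vertices, the vertex counts $n_k := |V(K^k G)|$ satisfy $n_{k+1} = 3 n_k - 4 \ge 2 n_k$ once $k \ge 1$, so $n_k \ge 4 \cdot 2^k$ grows at least exponentially at a rate independent of~$G$.

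Next, fix $\alpha \in (2/3, 1)$. Because $\tfrac{2n}{3n-4} \to 2/3$ as $n \to \infty$, there is a threshold $k_0$ depending only on $\alpha$ such that $\tfrac{2 n_k}{3 n_k - 4} \le \alpha$ for all $k \ge k_0$, regardless of $G$. The recurrence then yields $f_{k+1} \le \alpha f_k$ for $k \ge k_0$, so starting from the trivial bound $f_{k_0} \le 1$ we obtain $f_{k_0 + j} \le \alpha^j$. Choosing $j$ with $\alpha^j < \varepsilon$ and setting $i = k_0 + j$ gives the required $i$.

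The only delicate point is uniformity in $G$: a single integer $i$ must work for every polyhedral graph. The argument achieves this because both the vertex-growth bound $n_k \ge 4 \cdot 2^k$ and the contraction factor $\alpha$ depend only on $k$, not on the starting graph. I do not anticipate further obstacles, since the essential quantitative input — the cycle-doubling bound for Kleetopes of triangulations — is already in hand from the discussion preceding the lemma.
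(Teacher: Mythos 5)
Your proposal is correct and follows essentially the same route as the paper: after one Kleetope step the graph is a triangulation with at least $8$ vertices, so each further iteration multiplies the vertex count by at least $5/2$ while at most doubling the longest cycle, and the fraction contracts geometrically (the paper fixes the contraction factor at $4/5$ and takes $i=1+\log_{5/4}(1/\varepsilon)$, where you take a generic $\alpha\in(2/3,1)$). The uniformity in $G$ that you flag as the delicate point is handled identically in both arguments, via the $G$-independent lower bound on vertex growth.
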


\begin{proof}
If $G$ is a polyhedral graph, $K G$ is simplicial and has at least 8 vertices. Each subsequent iteration of the Kleetope operation multiplies the number of vertices by at least 5/2 and multiples the longest cycle length by at most 2, so we may take $i$ to be $1+\log_{5/4} 1/\varepsilon$.
\end{proof}

\subsection{Discrete Gauss map}

The \emph{discrete Gauss map} of a polyhedron (in a fixed position in $\mathbb{R}^3$) maps each feature of the polyhedron to a set of three-dimensional unit vectors, the outward-pointing perpendicular vectors of supporting planes of the polyhedron (planes that intersect its boundary but not its interior) that are tangent to the polyhedron at that feature. The set of all unit vectors forms a unit sphere, the \emph{Gauss sphere}, on which each face of the polyhedron is mapped to a single point, each edge of the polyhedron is mapped to an arc of a great circle, and each vertex of the polyhedron is mapped to a convex region bounded by arcs of great circles. That is, the Gauss map is a dimension-reversing map from the features of the given polyhedron to features of a polygonal subdivision of the sphere, sometimes called the spherical dual of the polyhedron~\cite{Sul-Bridges-06}.

We refer to distances along the surface of the Gauss sphere as \emph{geodesic distance}, and the maximum distance between any two points in a given set of unit vectors on the sphere as \emph{geodesic diameter}.
If the Gauss map takes two adjacent faces of the polyhedron to points whose geodesic distance is $d$, then the dihedral angle of the faces is exactly $\pi-d$.

\section{Nonexistence of convex isosceles realizations}

In this section we prove that iterated Kleetopes (with a sufficiently large number of iterations) do not have a convex realization with all faces isosceles. Our proof proceeds in stages, where each stage adds more iterations to the Kleetope and proves the existence of a face whose shape and neighborhood is more strongly constrained, until the constraints are sufficient to rule out isosceles triangles as faces. We will eventually show that in the final stage of the Kleetope construction process, it will be necessary to glue on a triangular pyramid with an obtuse base triangle, isosceles faces other than the base, and  small (very sharp) dihedral angles at the base edges of the pyramid. However, pyramids with these properties do not exist, as the next subsection shows.

\subsection{Isosceles pyramids with obtuse bases}

Every vertex of every convex polyhedron has an incident edge with dihedral angle $>\pi/3$,
but in some polyhedra (for instance flattened pyramids) these large dihedral angles avoid one of the faces. The main result of this section is that in tetrahedra with one face obtuse and the rest isosceles, the obtuse face cannot be avoided by the large dihedrals:

\begin{lemma}
\label{lem:big-dihedral}
Let $T$ be a tetrahedron in which one face $F$ is obtuse, and the other three faces are isosceles.
Then at least one of the dihedral angles on an edge of $F$ is greater than $\pi/3$.
\end{lemma}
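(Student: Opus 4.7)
The plan is to argue by contradiction, assuming every dihedral at an edge of $F$ is at most $\pi/3$. Let $D$ be the apex opposite $F$, $H$ its orthogonal projection into the plane of $F$, and $h=|DH|$. Since the dihedral at each base edge $e$ equals $\arctan(h/d_{e})$, where $d_{e}$ is the signed distance from $H$ to $e$ on the interior side of $F$, the hypothesis forces $d_{e}\ge h/\sqrt{3}$ for each edge; in particular $H$ lies strictly inside $F$.

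Write $A$ for the obtuse vertex and $a=|BC|$ for the unique longest side. Since $F$ is contained in the open disk bounded by its circumcircle and $H\in F$, the inscribed angle theorem gives $\angle BHC>\angle BAC>\pi/2$. The law of cosines in triangle $BHC$ then yields $|HB|^{2}+|HC|^{2}<a^{2}$, and lifting via $|XD|^{2}=|HX|^{2}+h^{2}$ for $X\in\{B,C\}$ produces the key inequality
\[
|BD|^{2}+|CD|^{2}\;<\;a^{2}+2h^{2}.
\]

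For each of the three side faces $F_{a}=BCD$, $F_{b}=ACD$, $F_{c}=ABD$, I classify the isosceles condition as apex-apex (the two edges from $D$ to the face's base vertices are equal, e.g.\ $|BD|=|CD|$ for $F_{a}$) or base-apex (one such edge equals the opposite base edge). Any two apex-apex conditions together give $|AD|=|BD|=|CD|$, placing $D$ above the circumcenter of $F$; but the circumcenter of an obtuse triangle lies outside the triangle, contradicting $H\in F$. Hence at most one side face is apex-apex, and at least two are base-apex.

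I then case-split on the base-apex choices. In a representative case, $F_{a}$ satisfies $|BD|=a$; the displayed inequality forces $|CD|^{2}<2h^{2}$, hence $|HC|<h$, so $H$ sits close to $C$. The constraint $d_{CA}\ge h/\sqrt{3}$ then pins $H$ to within $h\sqrt{2/3}$ of $C$ along the line $CA$. Each remaining base-apex equality on $F_{b}$ or $F_{c}$ constrains $H$ to a circle around one of $A,B,C$ of radius $\sqrt{\ell^{2}-h^{2}}$ for the appropriate base-edge length $\ell$, and any apex-apex equality to a perpendicular bisector; intersecting these loci with the small $C$-neighbourhood and with the inward-shrunken triangle defined by $d_{e}\ge h/\sqrt{3}$ produces incompatible upper and lower bounds on $h$. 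The main obstacle I anticipate is this case analysis itself — several distinct configurations remain after quotienting by the $B\leftrightarrow C$ symmetry — with the most delicate subcase being the symmetric one in which $F$ is isosceles at $A$ and $F_{a}$ is the sole apex-apex face, since the conflicting bounds on $h$ only collapse once the obtuseness inequality $a^{2}>|AB|^{2}+|AC|^{2}$ is used simultaneously with all three edge-distance bounds.
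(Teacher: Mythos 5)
Your setup is sound and genuinely different in flavor from the paper's proof: the reduction of the hypothesis to $d_e\ge h/\sqrt{3}$ for the projected apex $H$, the conclusion that $H$ lies strictly inside $F$, the inscribed-angle inequality $\angle BHC>\angle BAC>\pi/2$ yielding $|BD|^2+|CD|^2<a^2+2h^2$, and the observation that two ``apex-apex'' faces would force $H$ to be the circumcenter (which lies outside an obtuse triangle) are all correct. That last observation recovers the paper's ``two bases on $F$'' case. But the proof is not complete. Everything after ``I then case-split on the base-apex choices'' is a plan, not an argument: you work one ``representative case'' only as far as $|HC|<h$ and a localization of $H$ near $C$, and then assert that intersecting various loci ``produces incompatible upper and lower bounds on $h$'' without deriving any such bounds. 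You yourself flag the case analysis as the anticipated obstacle and single out a ``most delicate subcase'' that you have not resolved. That case analysis \emph{is} the lemma --- the paper needs five separate cases (two shared bases, long side, two shared apexes, and a residual case that itself splits into three subcases handled by deformation arguments), and several of those are tight: the ``two shared bases'' configuration produces dihedrals arbitrarily close to $\pi/3$, so any argument must be sharp there, and a hand-waved ``incompatible bounds on $h$'' cannot be presumed to close a case where the inequality degenerates to equality in the limit.

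Concretely, the missing content is: (i) an explicit enumeration of the residual configurations after quotienting by the $B\leftrightarrow C$ symmetry (which edge of each side face equals which, and whether the one permitted apex-apex face is present); and (ii) for each configuration, an actual derivation of contradictory constraints from $d_e\ge h/\sqrt{3}$, the distance relations $|XD|^2=|HX|^2+h^2$, and the obtuseness of $F$. Until (ii) is carried out --- in particular for the near-degenerate configurations where all three dihedrals on $F$ approach $\pi/3$ simultaneously --- the proof cannot be accepted. I would encourage you to either complete the computation (your coordinate framework may well support it) or to look at the paper's route, which avoids coordinates entirely by examining spherical vertex figures at the endpoints of a shared edge (a spherical triangle has angle sum exceeding $\pi$, so some angle exceeds $\pi/3$) and by comparing $T$ to deformed tetrahedra $T'$ in which the relevant dihedrals can be tracked monotonically.
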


\begin{proof}
We divide into cases according to how the base and apex of each isosceles triangle are arranged relative to $F$.
\begin{description}
\begin{figure}[t]
\centering\includegraphics[height=0.38\textwidth]{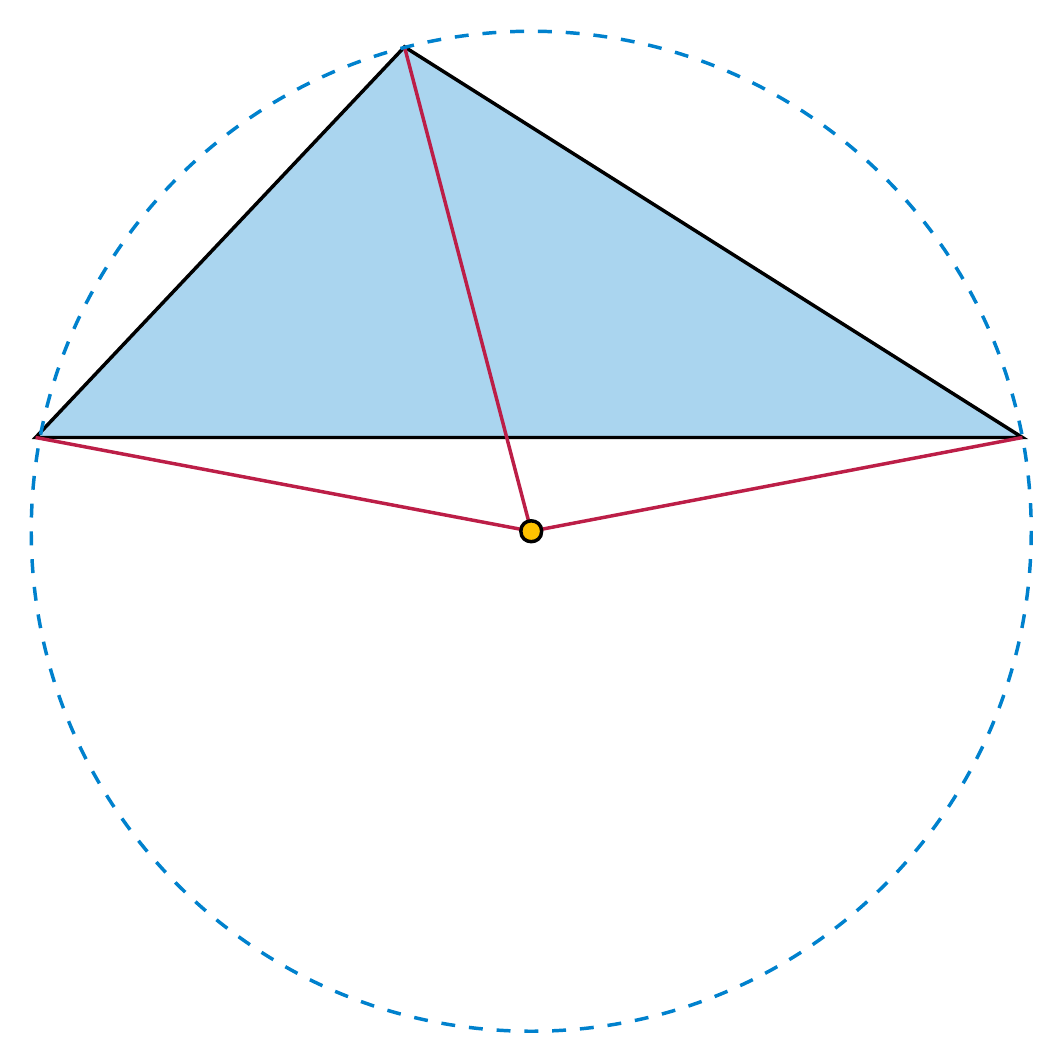}\qquad\raisebox{0.05\textwidth}{\includegraphics[height=0.28\textwidth]{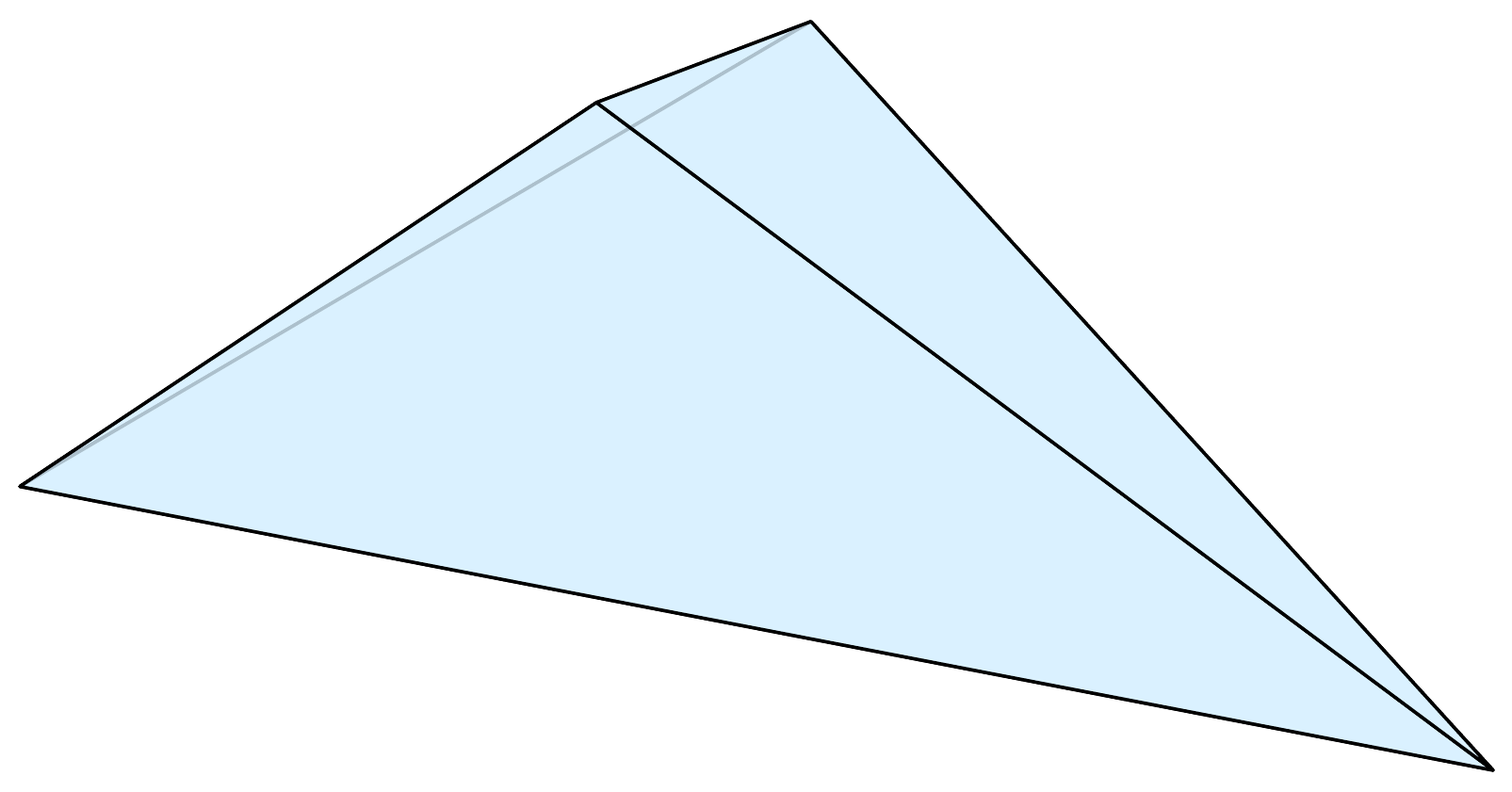}}
\caption{Left: The case of \autoref{lem:big-dihedral} with two bases on $F$, viewed in projection onto the plane containing $F$. The three edges of $T$ disjoint from $f$ (red) have equal lengths, and meet on a line perpendicular to the plane of $F$ through the circumcenter of $F$ (seen in projection as the small yellow circle). Right: The convex hull of two congruent isosceles triangles rotated around a shared edge, used in the case of the lemma with two shared bases and the case with a long side. At an endpoint of the shared edge, either the shared dihedral or the other two equal dihedrals are greater than $\pi/3$.}
\label{fig:big-dihedral-cases}
\end{figure}

\item[Two bases on $F$.] If at least two of the isosceles triangles have their base on an edge of $F$, then the three edges of $T$ that are disjoint from $F$ have equal length, and the vertex of $T$ where these edges meet lies on a line perpendicular to $F$ through the circumcenter of $F$. Because $F$ is obtuse, its long side separates the circumcenter from $F$, so the dihedral of~$T$ on the long side of $F$ is at least $\pi/2$. (See \autoref{fig:big-dihedral-cases}, left. This case requires $f$ to be obtuse.)

\item[Two shared bases.] If two of the three isosceles triangles share a base edge with each other, then in each of them one of the two equal sides is an edge of $F$. The third isosceles triangle has edges of these lengths, equal to the lengths of two edges of $F$, so it is congruent with $F$, and rotated from $F$ around the edge $e$ that it shares with $F$.
\autoref{fig:big-dihedral-cases} (right) depicts the case where the shared edge is the base of the isosceles triangle, but it could also be one of the sides. Because the two triangles on $e$ are congruent, the vertex figure of $T$ at either endpoint of $e$ (the intersection of $T$ with a sufficiently small sphere centered on that point) is a spherical isosceles triangle. Like any spherical triangle its angles sum to greater than $\pi$ so one angle is at least $\pi/3$. This large angle is either the dihedral at $e$ (the apex of the spherical isosceles triangle) or is the equal dihedral at the other two edges meeting at the same vertex of $T$, one of which is an edge of~$F$. (This case does not require $f$ to be obtuse, and can produce dihedral angles arbitrarily close to $\pi/3$.)

\item[Long side.] If the long edge $e$ of $F$ is one of the two equal sides of one of the isosceles triangles, consider the tetrahedron $T'$ formed as the convex hull of $F$ and a rotated copy of $F$ by an angle of $\pi/3$ around $e$ (on the same side of the plane through $F$ as $T$). As in the case of two shared bases, the vertex figures at the endpoints of $e$ are spherical isosceles triangles, and the choice of $\pi/3$ as rotation angle implies that, in $T'$, all the dihedrals at the endpoints of $e$ (which include all three dihedrals on the edges of $f$) are at least $\pi/3$.

The two endpoints of $e$ are the unique farthest pair of points in $T'$, so the edge of $T$ of equal length to $e$ must end at a point outside $T'$. This point is separated from $T'$ by one of the face planes of $T'$, which meets $F$ at one of its edges. The dihedral of $T$ on that edge is greater than the dihedral of $T'$, which is in turn at least $\pi/3$.

\begin{figure}[t]
\centering\includegraphics[scale=0.5]{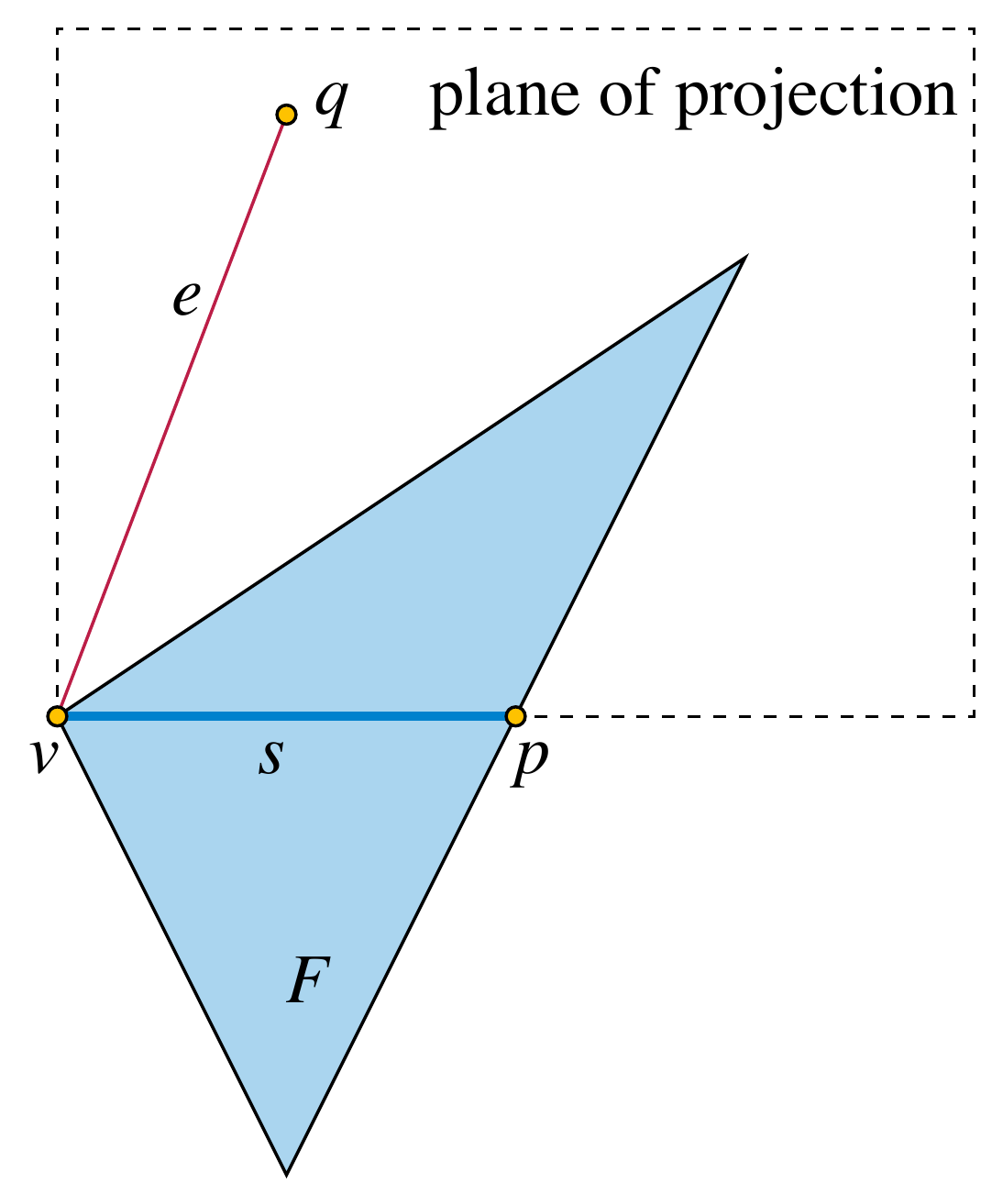}
\qquad\raisebox{8ex}{\includegraphics[scale=0.5]{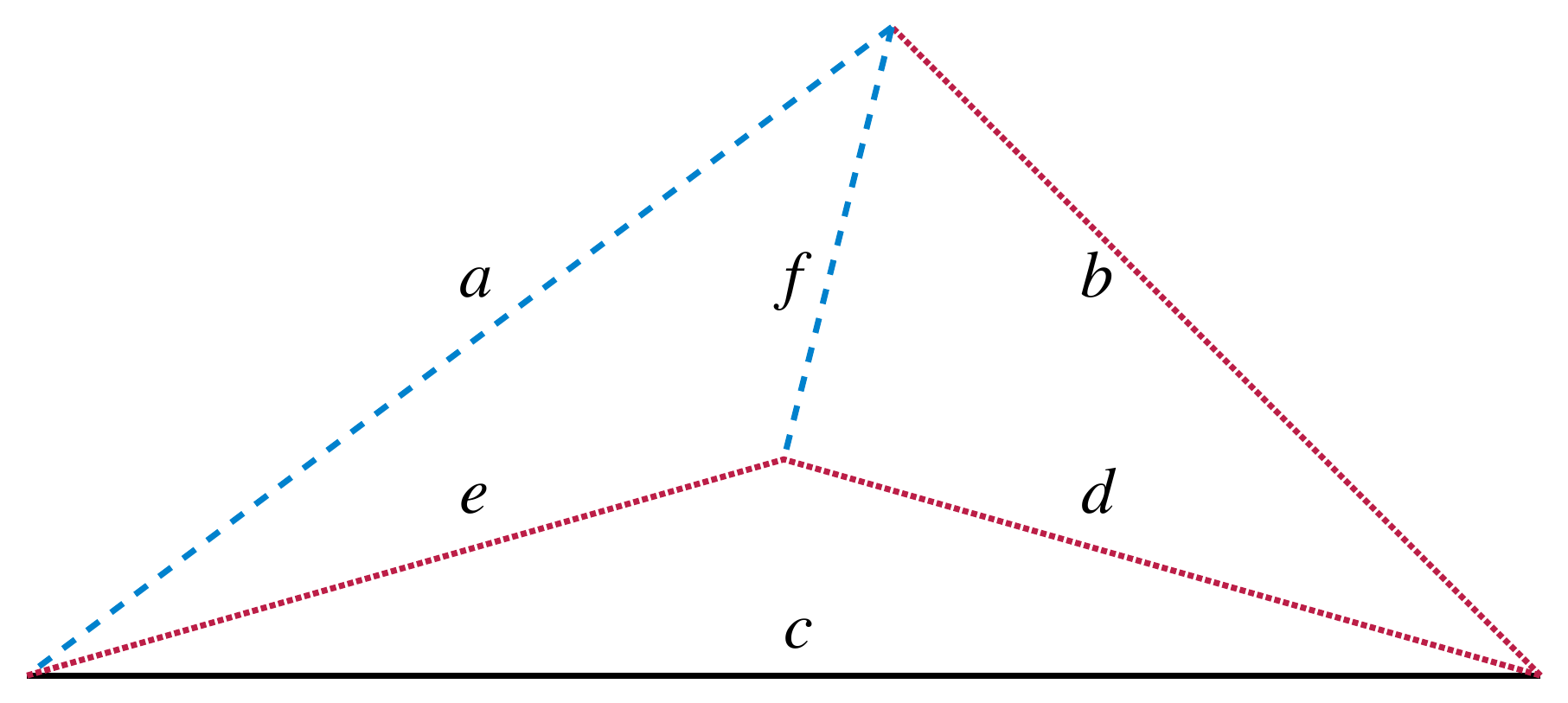}}
\caption{Left: Notation for the case of \autoref{lem:big-dihedral} of two shared apexes.
Right: Notation for the final ``none of the above'' case, shown schematically with equal-length edges in equal colors.}
\label{fig:last-cases}
\end{figure}

\item[Two shared apexes.] If two of the isosceles faces of $T$ both have their apex at the same vertex $v$ of $F$, then the three edges of $T$ meeting at that vertex all have the same length $\ell$. As two of them are edges of $F$, $F$ must be isosceles, with these edges as its sides. We can view this case in projection onto a plane perpendicular to $F$ through its center line. Of the three equal shared edges, the edge $e$ that is disjoint from $F$ lies in this plane, while the other two lie in symmetric positions on either side of it, and project to a point $p$ at the midpoint of the base of $F$. Triangle $F$ itself projects to a line segment $s$ connecting $v$ to $p$. Let $q$ be the non-shared endpoint of edge $e$ (\autoref{fig:last-cases}, left).

Then, in triangle $vpq$, edge $vq$ is longer than edge $vp$, because $vp$ is the projection of a three-dimensional line segment of equal length to $vq$. Let $\theta=\max(\angle qvp,\angle qpv)$, the larger of the two angles on side $s$ of this triangle. Then $\theta\ge\theta^*$ where $\theta^*$ is the base angle of the isosceles triangle obtained by moving $q$ to lie on the perpendicular bisector of $s$ (keeping the length of edge $vp$ the same). Because $vq$ is longer than $vp$, $\theta\ge\theta^*>\pi/3$. If $\theta=\angle qpv$, it is also the dihedral angle on the base of $f$, which is therefore greater than $\pi/3$. If $\theta=\angle qvp$, it is less than the dihedral angles on the two sides of $f$ (because these sides are not perpendicular to the plane of projection), which are therefore both greater than $\pi/3$. (This case does not require $F$ to be obtuse.)

\item[None of the above.] In the remaining case, let $a$, $b$, and $c$ be the sides of $F$, with $c$ being the longest side, and let $d$, $e$, and $f$ be the other three sides of $T$, with $d$ opposite $a$, $e$ opposite $b$, and $f$ opposite $c$. In order to avoid the case of a long side, isosceles triangle $cde$ must have $c$ as its base, so $d$ and $e$ must have equal length.
In order to avoid the cases of a shared base or a shared basis, exactly one of the other two triangles must have $d$ or $e$ as its base; by symmetry, we may assume without loss of generality that isosceles triangle $aef$ has $e$ as its base (so $a$ and $f$ are equal) and that isosceles triangle $bdf$ has $f$ as its base (so $b$, $d$, and $e$ are equal). \autoref{fig:last-cases} shows this case schematically (not to scale), with edges that in $T$ have equal lengths depicted as having equal colors and textures. We divide into sub-cases according to the comparison between these lengths:

\begin{itemize}
\item If $a$ and $f$ have the same length as $b$, $d$, and $e$, then we would be in the case of two shared bases (shared at $f$).
\item If $a$ and $f$ are shorter than $b$, $d$, and $e$, then consider the tetrahedron $T'$ formed from $T$ by lengthening $a$ and $f$ to have the same length as $b$, $d$, and $e$, but keeping all other edge lengths the same. $T'$ has two equilateral triangle faces $aef$ and $bdf$, and two isosceles faces with apex angle greater than equilateral, $abc$ and $cde$ (although these faces need not be obtuse). The same analysis as the case of two shared bases shows that, in $T'$, the dihedrals on $a$, $b$, $d$, and $e$ are all greater than $\pi/3$. Lengthening $a$ and $f$ causes the dihedral on $b$ to decrease (among other changes), so in $T$ as well, $b$ is greater than $\pi/3$.
\item If $a$ and $f$ are longer than $b$, $d$, and $e$, then consider the tetrahedron $T'$ formed from $T$ by lengthening $b$ to equal $a$ and $f$, keeping all other edge lengths the same. Then $T'$ falls into the case of two shared apexes, although in $T'$ triangle $abc$ may not be obtuse. From the analysis of that case, $T'$ has a dihedral greater than $\pi/3$ on at least one of $a$ and $c$. Lengthening $b$ causes these two dihedrals to decrease, so in $T$ as well, at least one of $a$ and $c$ has a dihedral greater than $\pi/3$.\qedhere
\end{itemize}
\end{description}
\end{proof}

\subsection{Kleetopes have faces without sharp dihedrals}

We define the \emph{sharpness} of a face $F$ of a polyhedron to be $\pi-\theta$ where $\theta$ is the smallest dihedral angle of an edge of $F$. Equivalently, on the Gaussian sphere, the sharpness of $F$ is the length of the longest edge connecting the Gaussian image of $F$ to one of its neighbors. In a face with small sharpness, all dihedrals are close to $\pi$. In terms of this quantity, \autoref{lem:big-dihedral} from the previous section can be interpreted as stating that one cannot glue an isosceles pyramid onto an obtuse triangle of sharpness less than $\pi/3$ while preserving the convexity of the resulting polyhedron.

It is possible for a convex polyhedron to have arbitrarily many faces which all have large sharpness; consider, for instance, the realizations of bipyramids (\autoref{fig:bipyramid}, left). As we show now, this is not true of sufficiently-iterated Kleetopes.

\begin{lemma}
\label{lem:unsharp}
For every $\varepsilon$ there exists $i$ such that, for all polyhedral graphs $G$ and all convex realizations of $K^i G$, the realization includes at least one face whose sharpness is $\le\varepsilon$.
\end{lemma}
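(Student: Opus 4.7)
The plan is to argue by contradiction: suppose that for some fixed $\varepsilon>0$ and arbitrarily large $i$, there exists a polyhedral graph $G$ with a convex realization of $K^{i}G$ in which every face has sharpness greater than $\varepsilon$. Translating via the discrete Gauss map, this assumption says that every Gauss vertex (the image of a polyhedron face) has at least one incident arc of geodesic length greater than $\varepsilon$; since $K^{i}G$ is a triangulation, every such Gauss vertex has degree exactly $3$, and the Gauss images of the polyhedron vertices tile the sphere with convex spherical polygons of total area $4\pi$ by Descartes' theorem.

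I focus on the apex vertices added in the final Kleetope iteration, of which there are $F(K^{i-1}G)=2V(K^{i-1}G)-4$, each of degree $3$ in $K^{i}G$. At such an apex $a$, the Gauss polygon is a spherical triangle whose three sides are the arcs of the three apex-edge dihedrals. Call $a$ \emph{flat} if all three of these arcs have length at most $\varepsilon$; then the Gauss triangle has perimeter at most $3\varepsilon$ and, by the spherical isoperimetric inequality, angular defect $O(\varepsilon^{2})$ at $a$. At a flat apex, each of the three side faces of the pyramid must draw its sharpness from its base edge (its two apex edges being too short on the Gauss sphere to contribute), so all three base edges of the pyramid are sharp in $K^{i}G$. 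Conversely, any non-flat apex has an incident Gauss arc longer than $\varepsilon$, and a short spherical-triangle calculation forces its angular defect to be $\Omega(\varepsilon^{2})$; hence Descartes' theorem caps the number of non-flat apexes at $O(1/\varepsilon^{2})$.

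Because a flat apex lies $O(\varepsilon)$-close to the plane of its base triangle, deleting it from the convex realization of $K^{i}G$ perturbs the dihedrals at its three base edges by $O(\varepsilon)$, so each of those base edges remains sharp (with a slightly weakened constant $c\varepsilon$) in the resulting convex realization of $K^{i-1}G$. Iterating this peeling operation, and using the lower bound $V(K^{j}G)\geq 8\cdot(5/2)^{j-1}$ established in the proof of \autoref{lem:short}, the plan is to show that for $i$ large enough in terms of $\varepsilon$ the propagated sharpness condition survives enough levels of peeling to force a structural constraint on the base graph $G$ that no polyhedral graph can satisfy. The main obstacle will be controlling both sources of loss simultaneously across the peeling: the geometric degradation of the sharpness constant from $\varepsilon$ to $c\varepsilon$ to $c^{2}\varepsilon$ and so on, which must not collapse to zero before the descent finishes, and the cumulative combinatorial count of exceptional non-flat apexes, bounded by $O(1/\varepsilon^{2})$ per level, which must remain a negligible fraction of faces at every intermediate level. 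Balancing these against the exponential growth of $V(K^{j}G)$ is the key technical task that closes the argument.
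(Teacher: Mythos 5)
Your approach is genuinely different from the paper's, but it contains a fatal quantitative error and is in any case left unfinished at its crucial step. The central counting claim---that a non-flat apex (one with an incident Gauss arc longer than $\varepsilon$) must have angular defect $\Omega(\varepsilon^{2})$, so that Descartes' theorem caps the number of such apexes at $O(1/\varepsilon^{2})$---is false. The Gauss image of a degree-3 apex is a spherical triangle whose side lengths are $\pi$ minus the three dihedral angles at the apex edges, and whose area is the angular defect; a spherical triangle with one side of length $>\varepsilon$ can be an arbitrarily thin sliver (side lengths $\varepsilon$, $\varepsilon/2+\delta$, $\varepsilon/2+\delta$, say) of arbitrarily small area. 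Geometrically, this is an apex whose cone angle is nearly $2\pi$ even though some dihedral at an apex edge is far from $\pi$: picture a degree-3 vertex where two lateral faces are nearly coplanar and the third folds sharply against them. So nothing bounds the number of non-flat apexes, and the peeling argument loses its handle on the exceptional set at every level. Beyond this, the two difficulties you flag yourself are real and unresolved: the multiplicative degradation $\varepsilon\to c\varepsilon\to c^{2}\varepsilon$ collapses to zero after the $\Theta(\log(1/\varepsilon))$-many peeling levels you would need, and the final contradiction (``a structural constraint on $G$ that no polyhedral graph can satisfy'') is never identified. A smaller issue: the last peeling step, from $KG$ down to $G$, is not available when $G$ is not a triangulation, since the base vertices of the pyramid over a non-triangular face need not be coplanar.

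For contrast, the paper avoids all vertex-by-vertex defect accounting. It covers the Gauss sphere by an arrangement of $c=O(1/\varepsilon)$ great circles cutting it into cells of geodesic diameter at most $\varepsilon$; each great circle traces a silhouette, which is a simple cycle of $K^{i}G$. By \autoref{lem:short}, $i$ can be chosen so that all these cycles together use fewer than a third of the edges, whence some face $F$ has none of its three edges on any silhouette, so $F$ and its three neighbors map into a single cell and $F$ has sharpness at most $\varepsilon$. The only property of iterated Kleetopes used is shortness of cycles, which is exactly why the lemma is stated for Kleetopes rather than arbitrary triangulations; your proposal never exploits this property, which is another sign that the argument as sketched cannot close.
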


\begin{proof}
Construct an arrangement of $c=O(1/\varepsilon)$ great circles on the Gauss sphere that partition the sphere into cells with geodesic diameter at most $\epsilon$, for instance by choosing two points on the sphere at geodesic distance $\pi/2$ and choosing for both points a collection of great circles crossing each other at equal angles at that point. By  \autoref{lem:short}, let $i$ be such that the longest cycle in $K^i G$ includes a fraction less than $1/3c$ of the edges of $K^i G$. We claim that $i$ has the stated property.

To prove this, consider any given choice of $G$ and convex realization of $K^i G$, and rotate the arrangement of great circles (if necessary) so that none of them passes through a vertex of the spherical dual of $K^i G$.
Then each of the great circles passes through a sequence of edges and faces of the spherical dual corresponding to a simple cycle in $K^i G$, the cycle formed as the silhouette of~$K^i G$ in a perpendicular projection onto the plane of the great circle. By \autoref{lem:short} and the choice of~$i$, these silhouette cycles together use a fraction less than $1/3$ of the edges of $K^i G$. It follows that at least one face $F$ of $K^i G$ is bounded by edges that are not in any silhouette cycle, for if all faces had at least one edge in a cycle the total fraction of edges in cycles would be at least $1/3$. The Gaussian images of $F$ and its three neighboring triangles in $K^i G$ are all in the same cell of the arrangement of great circles, of geodesic diameter at most $\varepsilon$. It follows that the edge from the Gaussian image of $F$ to  the farthest of its neighbors has length at most $\varepsilon$.
\end{proof}

If we glue a pyramid onto a face $F$ of a convex polyhedron to produce another convex polyhedron (as happens for realizations of Kleetopes), the faces of the pyramid that replace $F$ can have greater sharpness than $F$ had. However, the sharpness cannot increase too much:

\begin{lemma}
\label{lem:doublesharp}
Let $P$ be a triangulation, let $Q$ be a convex realization of $K^i P$ for some $i$, and let $F$ be a face of $P$ that (in the realization of $P$ derived from $Q$ by removing the vertices of $Q\setminus P$) has sharpness $\varepsilon$. Then every face of $Q$ formed by  subdividing $F$ has sharpness at most $2\varepsilon$.
\end{lemma}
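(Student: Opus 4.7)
Plan. I would work with the height function $h\colon F\to[0,\infty)$ of $Q$ above the plane $\Pi$ of $F$ in the direction $g_F$, where $g_F$ is the outward normal of $F$ in the derived $P$-realization: set $h(x)=\sup\{t\ge 0:x+t\,g_F\in Q\}$. Since $Q$ is convex and contains $F$, the function $h$ is concave and piecewise-linear; for small enough $\varepsilon$ it vanishes on $\partial F$, because each face of $Q$ containing an edge $e$ of $F$ has outward normal with positive component in direction $g_F$, so moving up from $x\in e$ in direction $g_F$ immediately leaves $Q$. The linear pieces of $h$ correspond to the faces of $Q$ subdividing $F$, and the gradient magnitude of $h$ on the piece for $F'$ is $\tan\theta_{F'}$, where $\theta_{F'}=d(g_{F'},g_F)$ is the spherical distance on the Gauss sphere between the outward normals of $F'$ and $F$. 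The lemma will follow from the bound $\theta_{F'}\le\varepsilon$ on every such $F'$, plus short bookkeeping at the edges of $F'$.

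I would first bound $\theta_{F'}$ on pieces adjacent to $\partial F$. For each edge $e$ of $F$ shared with a neighbor $N$ of $F$ in $P$, let $F_e$ be the unique face of $Q$ subdividing $F$ that contains $e$. Projecting onto the plane perpendicular to $e$, the third vertex of $N$---a vertex of $Q$ lying on the plane of $N$, which meets $\Pi$ at geodesic angle $\delta_e:=d(g_F,g_N)\le\varepsilon$ along $e$---must lie on the interior side of $F_e$'s plane by convexity of $Q$. A direct trigonometric calculation then shows $\theta_{F_e}\le\delta_e\le\varepsilon$ and moreover that $g_{F_e}$ sits on the great-circle arc from $g_F$ to $g_N$ between them. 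I would then propagate the bound to every linear piece of $h$---the step I expect to be the main obstacle---via a one-dimensional concavity argument. Starting from any piece with nonzero gradient $g'$, travel from a point of that piece in the downhill direction $-g'/|g'|$; restricted to this ray, $h$ is a concave piecewise-linear function of arclength whose slope at the starting point is $-|g'|$, so by 1D concavity the one-sided slope is non-increasing along the ray, and hence the unsigned slope never decreases. Since $h$ vanishes where the ray exits $F$, the last linear piece the ray meets is adjacent to $\partial F$ and has gradient magnitude at least $|g'|$. Combined with the boundary bound, $|g'|\le\tan\varepsilon$ and hence $\theta_{F'}\le\varepsilon$ for every face $F'$ subdividing $F$.

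To finish, I bound the sharpness of a given face $F'$ subdividing $F$ by examining each of its three edges. If an edge is internal to the subdivision of $F$, the opposite face in $Q$ also subdivides $F$, and the triangle inequality on the Gauss sphere together with the bound $\theta\le\varepsilon$ on both faces gives geodesic distance at most $2\varepsilon$. If an edge is an edge $e$ of $F$, the opposite face in $Q$ is the analogous $N_e$ for $N$; applying the first step symmetrically to $N$ places $g_{N_e}$ on the same great-circle arc from $g_F$ to $g_N$ as $g_{F_e}=g_{F'}$, whose length is $\delta_e\le\varepsilon$, so the two images lie within $\varepsilon$ of each other. Either way every edge of $F'$ contributes sharpness at most $2\varepsilon$, which proves the lemma.
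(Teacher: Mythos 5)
Your proposal is correct, but it reaches the conclusion by a genuinely different route than the paper. The paper argues entirely on the Gauss sphere: the image of $F$ is a single point $g_F$ incident to three arcs of length at most $\varepsilon$, the dual subgraph that replaces this point when $F$ is subdivided is attached to the rest of the spherical dual at three points on those same arcs and is asserted to lie inside their spherical convex hull $T$, and the triangle inequality bounds the diameter of $T$ by $2\varepsilon$; every dual edge incident to a subdividing face then has length at most $2\varepsilon$. You instead work in primal space, realizing the subdivided faces as the graph of a concave height function over the plane of $F$, pinning down the normals of the boundary-adjacent pieces via supporting planes at the edges of $F$ (which correctly places both $g_{F_e}$ and $g_{N_e}$ on the arc from $g_F$ to $g_N$), and propagating the bound inward by one-dimensional concavity along a downhill ray. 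This yields the slightly different containment $d(g_{F'},g_F)\le\varepsilon$ for every subdividing face, after which your triangle-inequality bookkeeping matches the paper's. What your version buys is an explicit mechanism for the containment step that the paper only asserts (``all of these edges lie within spherical triangle $T$''); what it costs is two small points you should make explicit: (i) you may assume $\varepsilon<\pi/2$ without loss of generality, since sharpness never exceeds $\pi$ and the conclusion is vacuous otherwise, which removes the vagueness of ``for small enough $\varepsilon$''; and (ii) the identification of the subdividing faces with the linear pieces of $h$ requires knowing that none of them is vertical or downward-facing relative to $g_F$ --- this is not circular, and follows because the subdividing faces form a disk in $\partial Q$ with boundary $\partial F$ lying in the upper half-space, while the other disk of $\partial Q$ bounded by $\partial F$ contains vertices of $P$ at negative height, so the subdividing disk must coincide with the graph of $h$ over $F$. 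With those two sentences added (and a word about perturbing the downhill ray so it exits through the relative interior of an edge rather than a vertex of $F$), your argument is a complete and somewhat more self-contained proof of the lemma.
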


\begin{proof}
On the Gaussian sphere, the image of $F$ is a point, incident to three great circle arcs of length at most $\varepsilon$ corresponding to the dihedrals along the edges of $F$. The spherical triangle $T$ forming the convex hull of these three arcs has geodesic diameter at most $2\varepsilon$, by the triangle inequality. Subdividing $F$ to form $Q$ replaces the Gaussian image of $F$ (a single vertex in the spherical dual of $P$) by a subgraph, attached to the rest of the graph at three vertices on the same three arcs. The sharpness of any face of the subdivision equals the length of the longest edge incident to its Gaussian image, which may either be an edge of this subgraph or one of the three edges connecting this subgraph to the rest of the graph. All of these edges lie within spherical triangle $T$, so their lengths are at most the diameter of $T$, which is at most $2\varepsilon$.  
\end{proof}

\subsection{Kleetopes have obtuse faces}

The tetrahedron formed as the convex hull of the origin and the three perpendicular unit basis vectors of three-dimensional space has the same shape as the truncated corner of a cube, with three isosceles right triangle faces and one equilateral face.
When the right triangle faces are projected perpendicularly onto the equilateral face, their projections are isosceles triangles with apex angle $2\pi/3$ and base angles $\pi/6$. The dihedral angles of this tetrahedron are $\pi/2$ between pairs of right-triangle faces, and $\tfrac{1}{2}\cos^{-1}(-\tfrac{1}{3})\approx 0.304\,\pi$ between the right triangles and the equilateral triangles; we use $\varphi$ to denote this latter angle. The familiar geometry of this example forms the extreme case for the following lemma:

\begin{lemma}
\label{lem:obtuse-projection}
Let $T$ be a triangle in $\mathbb{R}^3$, let $e$ be an edge of $T$, and $P$ be a plane through $e$ forming an angle less than $\varphi$ with the plane of $T$. Suppose that the perpendicular projection $T'$ of $T$ onto $P$ is a triangle $T'$ whose largest angle
is at least $2\pi/3$, at the vertex opposite $e$. Then $T$ is obtuse.
\end{lemma}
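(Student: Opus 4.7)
My plan is to introduce coordinates that make the projection explicit, rewrite both the hypothesis on $T'$ and the desired conclusion on $T$ as inequalities about the position of the apex vertex of $T$ opposite $e$, and then verify the implication by a direct algebraic comparison whose critical threshold turns out to be exactly $\varphi$.

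First I would rescale so that $|e|=2$ and place orthonormal coordinates on $P$ with the endpoints of $e$ at $A=(-1,0)$ and $B=(1,0)$. In the plane of $T$, I would parametrise the apex vertex $v$ opposite $e$ by its altitude-from-$e$ coordinates $(x_0,y_0)$, with $y_0>0$. Since $P$ meets the plane of $T$ along $e$ at dihedral angle $\alpha<\varphi$, the perpendicular projection of $v$ onto $P$ is $v'=(x_0,\,y_0\cos\alpha)$; write $Y=y_0\cos\alpha$, so $v'=(x_0,Y)$.

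Next I would convert the two conditions into inequalities. By the inscribed angle theorem applied to the chord $AB$ in $P$, the hypothesis $\angle Av'B\ge 2\pi/3$ is equivalent to $v'$ lying in the closed disk of radius $2/\sqrt 3$ centred at $(0,-1/\sqrt 3)$, which expands to
\[
x_0^2 + Y^2 + \frac{2Y}{\sqrt 3} \;\le\; 1.
\]
The portion of this disk with $Y>0$ lies in the strip $|x_0|<1$ and satisfies $Y\le 1/\sqrt 3$; the first of these bounds automatically makes the $T$-angles at $A$ and $B$ both acute (each being of the form $\arctan(y_0/(1\pm x_0))$), so it remains only to show that $T$ is obtuse at $v$. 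By the converse of Thales's theorem, this amounts to showing $x_0^2+y_0^2<1$.

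The final step is then a short algebraic comparison. Using $y_0=Y/\cos\alpha$, I would expand $x_0^2+y_0^2=(x_0^2+Y^2)+Y^2\tan^2\alpha$; feeding in the disk bound above reduces the desired strict inequality to $Y\tan^2\alpha<2/\sqrt 3$, and then $Y\le 1/\sqrt 3$ reduces it further to $\tan^2\alpha<2$, i.e.\ $\cos^2\alpha>1/3$. I expect no serious obstacle beyond bookkeeping; the one point worth checking is that this threshold really is $\varphi$, which follows from the defining relation $\cos(2\varphi)=-1/3$ together with the double-angle identity $\cos(2\varphi)=2\cos^2\varphi-1$. This is also consistent with the extremal example (the right-isosceles face of the cube-corner tetrahedron) described just before the lemma, confirming that $\varphi$ is the sharp threshold.
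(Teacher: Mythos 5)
Your proof is correct, and it reaches the conclusion by a genuinely different route from the paper's. The paper argues synthetically: it first reduces, without loss of generality, to the extremal tilt angle $\varphi$, and then compares two loci in the plane $P$ --- the semi-ellipse obtained by projecting the Thales semicircle that separates acute from obtuse apexes of $T$, and the circular arc of inscribed angle $2\pi/3$ --- showing that the arc lies inside the semi-ellipse by a tangency-and-intersection-count argument for a circle meeting an ellipse. You work with the same two loci but encode membership in each as a coordinate inequality ($x_0^2+Y^2+2Y/\sqrt3\le 1$ versus $x_0^2+y_0^2<1$) and deduce one from the other directly, with the threshold $\cos^2\alpha>1/3$, equivalently $\alpha<\varphi$ via $\cos 2\varphi=-\tfrac13$, emerging from the algebra. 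Your version buys a cleaner treatment of strictness: you obtain the strict inequality $x_0^2+y_0^2<1$ for every $\alpha<\varphi$ in one pass, without the paper's detour through ``right or obtuse at angle exactly $\varphi$'' and without its preliminary claim that increasing the tilt only increases the apex angle of the projection. What the synthetic argument buys in exchange is a picture: the tangency of the arc and the semi-ellipse at the point $c$ makes visible why $\varphi$ is the sharp threshold, which your computation confirms but does not display. Two minor remarks: your observation that the angles of $T$ at the endpoints of $e$ are acute is true but not needed, since exhibiting one obtuse angle already suffices; and when you divide by $Y$ you should note $Y>0$, which holds because $T$ is nondegenerate and $\alpha<\varphi<\pi/2$.
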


\begin{proof}
Increasing the angle between the plane of $T$ and $P$ can only increase the largest angle of~$T'$, so we may assume without loss of generality that the angle between the plane of $T$ and $P$ is exactly $\varphi$, and prove that in this case $T$ is either right or obtuse. In the plane of $T$, the points $p$ that form an acute vertex of a triangle with opposite edge $e$ are separated from the points that form an obtuse vertex by a semicircle having $e$ as its diameter.
The midpoint of this semicircle is a point that, together with $e$, forms an isosceles right triangle, and its projection onto $P$ is the center point $c$ of an equilateral triangle that has $e$ as one of its three sides. The projection of the semicircle itself onto $P$ is a semi-ellipse having $e$ as its diameter and passing through~$c$ (the boundary of the outer yellow region in \autoref{fig:obtuse-projection}).

\begin{figure}[t]
\centering\includegraphics[scale=0.5]{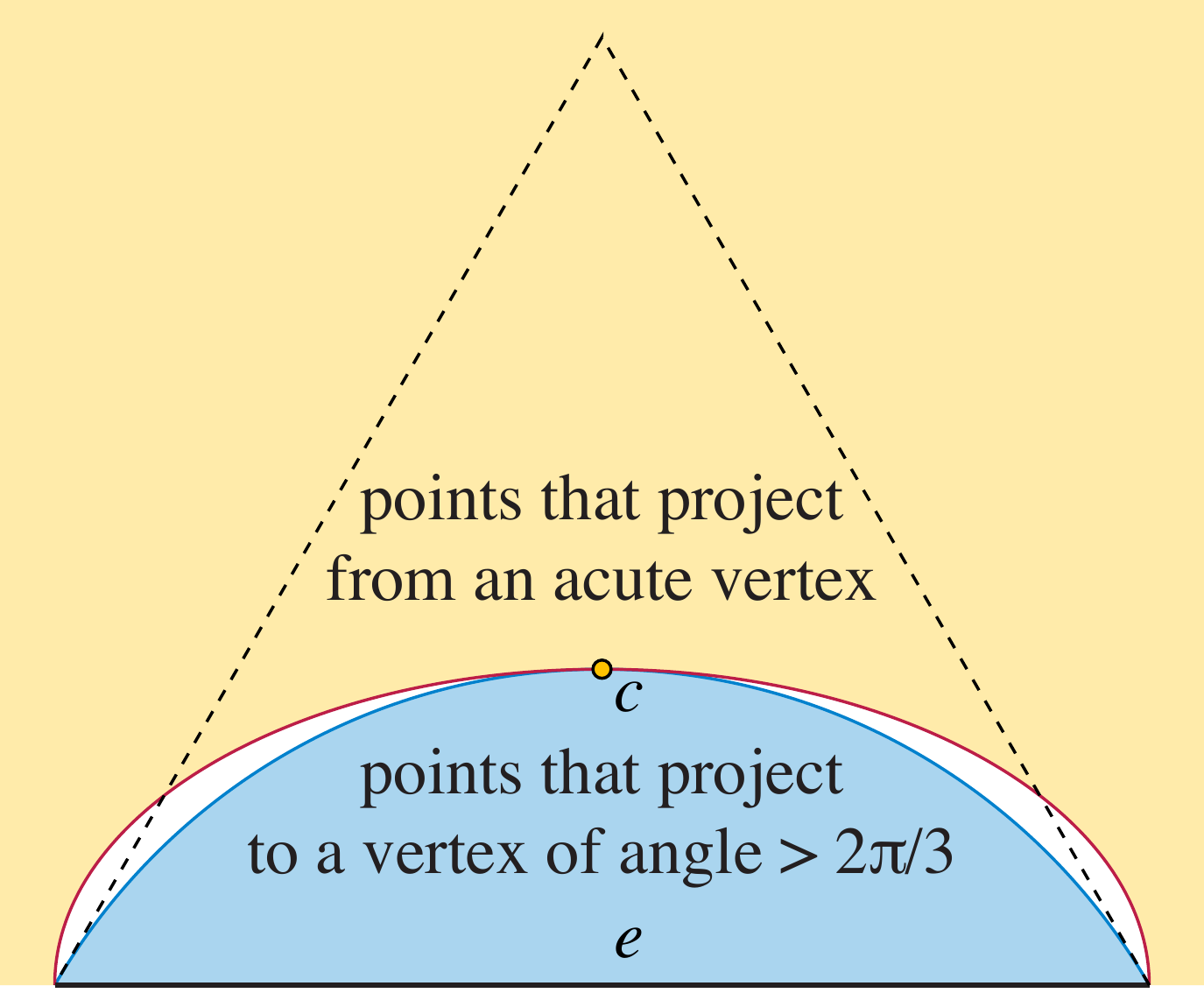}
\caption{A semi-ellipse (projected from a semicircle) and a circular arc through the same three points (endpoints of $e$ and the center $c$ of an equilateral triangle on edge $e$) separate points that project from an acute angle and points that project to an angle $>2\pi/3$ in the proof of \autoref{lem:obtuse-projection}.}
\label{fig:obtuse-projection}
\end{figure}

In $P$, the points $q$ that form a vertex of angle $>2\pi/3$ triangle with opposite side $e$ are separated from the points that form a vertex of angle $<2\pi/3$ by a circular arc of angle $\pi/3$ connecting the two endpoints of $e$ through $c$ (the boundary of the inner blue region in \autoref{fig:obtuse-projection}). This circular arc lies inside the semi-ellipse, except for the endpoints of $e$ (where the circle and ellipse, if continued, would cross) and the point $c$ where the two curves are tangent (counting as a double contact and using up all four possible points of intersection of a circle and ellipse). For $T'$ to have angle at least $2\pi/3$ opposite $e$, its vertex must be on or inside the circular arc, and therefore either inside the semi-ellipse or on it at $c$. To project to a point on or inside the semi-ellipse, the vertex of $T$ opposite~$v$ must be on or inside the semi-circle, and therefore $T$ must be obtuse or right.
\end{proof}

Using this lemma, we can prove that sufficiently iterated Kleetopes have an obtuse face, and, moreover, one with low sharpness:

\begin{lemma}
\label{lem:has-obtuse}
Let $\varepsilon$ be a positive number with $\varepsilon<\varphi$, let $i$ be an integer large enough that, by \autoref{lem:unsharp}, every convex realization of $K^i P$ has a face of sharpness less than $\epsilon$, and let $P$ be any triangulation. Then every convex realization of $K^{i+1} P$ has an obtuse-triangle face of sharpness less than $2\varepsilon$.
\end{lemma}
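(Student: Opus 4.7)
The plan is to use the sharpness bounds from \autoref{lem:unsharp} and \autoref{lem:doublesharp} to set up a configuration in which \autoref{lem:obtuse-projection} can be applied to one of the three pyramid faces that subdivide a low-sharpness face of $K^iP$. Given a convex realization $Q$ of $K^{i+1}P=K(K^iP)$, I would begin by deleting the outermost apex vertices to obtain a convex realization $R$ of $K^iP$ (valid because $K^iP$ is a triangulation, so each pyramid has a flat base). By the hypothesis on $i$ and \autoref{lem:unsharp}, $R$ has a face $F$ of sharpness less than $\varepsilon$. In $Q$, $F$ is replaced by a pyramid with apex $v$ and three lateral faces $F_1,F_2,F_3$, each of sharpness less than $2\varepsilon$ by \autoref{lem:doublesharp}.

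The goal is then to apply \autoref{lem:obtuse-projection} with $T=F_j$, $e=e_j$ (the edge of $F_j$ shared with $F$), and $P$ the plane of $F$. First I would project $v$ perpendicularly onto the plane of $F$ to obtain a point $v'$. The critical geometric observation is that $v'$ must lie inside $F$: otherwise, say $v'$ is on the opposite side of some edge $e_j$ from $F$'s interior, and then the outward normal of $F_j$ flips to a direction with substantial negative component along $F$'s outward normal, making the dihedral between $F_j$ and its old $K^iP$-neighbor $N_j$ across $e_j$ close to $\pi$---this violates the sharpness bound on $F_j$ whenever $\varepsilon$ is below an absolute constant, certainly when $\varepsilon<\varphi$. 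Given $v'\in F$, the three perpendicular projections $F'_j$ partition $F$ around $v'$, so their angles at $v'$ sum to $2\pi$, and at least one of them---say that of $F'_1$---is at least $2\pi/3$.

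It remains to verify the angle-between-planes hypothesis of \autoref{lem:obtuse-projection}. The angle between the planes of $F_1$ and $F$ equals the geodesic distance between their outward normals on the Gauss sphere; following the argument in the proof of \autoref{lem:doublesharp}, both normals lie in a single spherical triangle of diameter at most $2\varepsilon$, controlled by the sharpness hypotheses. \autoref{lem:obtuse-projection} then yields that $F_1$ is obtuse, and its sharpness is already bounded by $2\varepsilon$ from \autoref{lem:doublesharp}. The step I expect to be the main obstacle is the claim $v'\in F$: verifying it requires carefully tracking how the Gauss image of $F_j$ sits relative to the former image of $F$, and in particular checking that the sharpness bound on $F_j$ is strong enough to exclude backward-folded pyramid faces; a close second is threading the numerical inequalities so that the bound of $2\varepsilon$ on the fold angle stays safely below $\varphi$ under the stated hypothesis $\varepsilon<\varphi$.
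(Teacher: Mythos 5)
Your overall skeleton matches the paper's proof: delete the outer apexes to get a realization of $K^iP$, find a face $F$ of sharpness less than $\varepsilon$ by \autoref{lem:unsharp}, project the three pyramid faces onto the plane of $F$, pick the one whose projection has angle at least $2\pi/3$ at the shared vertex, and finish with \autoref{lem:obtuse-projection} and \autoref{lem:doublesharp}. However, the two steps you yourself flag as obstacles are genuinely broken as written, and both have the same (simple) fix that you did not find. First, your justification that $v'$ lies inside $F$ argues that a backward-folded face $F_j$ would make the dihedral with its neighbor $N_j$ ``close to $\pi$,'' violating the sharpness bound on $F_j$. Under the paper's conventions this is backwards: sharpness is $\pi-\theta$ for $\theta$ the smallest dihedral, so a dihedral close to $\pi$ is exactly what low sharpness demands, not a violation of it. The correct contradiction is with convexity, not sharpness: the interior dihedral at $e_j$ between $N_j$ and $F_j$ equals (dihedral of $F$ at $e_j$) plus (tilt of $F_j$ above $F$), and since the former is at least $\pi-\varepsilon$, a tilt exceeding $\pi/2$ would push this past $\pi$.

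Second, your verification of the angle-between-planes hypothesis of \autoref{lem:obtuse-projection} stalls at the bound $2\varepsilon$, which under the stated hypothesis $\varepsilon<\varphi$ need not be less than $\varphi$; you acknowledge this but leave it unresolved, so the lemma cannot be applied as your argument stands. The paper closes both gaps at once with one observation: convexity of the dihedral at $e_j$, together with the near-coplanarity of $N_j$ and $F$ (sharpness of $F$ less than $\varepsilon$), forces the tilt of each $F_j$ relative to the plane of $F$ to be less than $\varepsilon$ itself --- not $2\varepsilon$. This single inequality simultaneously places $v'$ inside $F$ (since $\varepsilon<\varphi<\pi/2$) and gives the angle bound $<\varphi$ needed for \autoref{lem:obtuse-projection}. (The $2\varepsilon$ from \autoref{lem:doublesharp} is used only for the final sharpness claim about the obtuse face, where it is the right constant.)
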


\begin{proof}
Given a realization of $K^{i+1} P$, form from it a realization of $K^i P$ by removing the extra vertices added in going from $K^i P$ to $K^{i+1} P$, and apply \autoref{lem:unsharp} to this realization to find a face $F$ of $K^i P$ whose sharpness is less than $\varepsilon$. Now consider the three faces that subdivide $F$ in $K^{i+1} P$, and project these three faces perpendicularly onto the plane of $F$. Their projections must be contained within $F$, for otherwise the angle between one of these faces and $F$ would be greater than $\pi/2$, too big to form a convex dihedral with the neighboring face across the edge of $F$ given the low sharpness of $F$ itself. Therefore, within the plane of projection, these three projected triangles subdivide $F$ and meet at a shared vertex interior to $F$. If $F'$ is the face of $K^{i+1}$ whose projection has the largest angle at this shared vertex, then this angle is least $2\pi/3$. By the assumption on the sharpness of $F$, the plane of $F'$ makes an angle less than $\varphi$ with the plane of $F$, so we can apply  \autoref{lem:obtuse-projection} to $F'$ and its projection. By this lemma, $F'$ is obtuse, and by \autoref{lem:doublesharp} its sharpness is less than $2\varepsilon$.
\end{proof}

\subsection{Iterated Kleetopes cannot be convex and iscosceles}

Combining the results from this section, we have:

\begin{theorem}
There exists an integer $j$ such that, for every triangulation $P$, every convex realization of $K^j P$ has a non-isosceles face.
\end{theorem}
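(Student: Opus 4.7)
My plan is to choose $j$ so that the obtuse face produced by \autoref{lem:has-obtuse} is too unsharp to support an isosceles pyramid under \autoref{lem:big-dihedral}. I would fix $\varepsilon>0$ small enough that $2\varepsilon<\pi/3$ and also $\varepsilon<\varphi$ (any $\varepsilon<\pi/6$ works, since $\pi/6<\varphi$), let $i$ be the integer supplied by \autoref{lem:has-obtuse} for this $\varepsilon$, and set $j=i+2$.

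Suppose toward contradiction that some triangulation $P$ admits a convex realization $Q$ of $K^jP$ in which every face is isosceles. Removing the vertices added in going from $K^{i+1}P$ to $K^jP$ yields a convex realization of $K^{i+1}P$, so by \autoref{lem:has-obtuse} some face $F$ of $K^{i+1}P$ is obtuse and has sharpness strictly less than $2\varepsilon$ in this realization. In $Q$, this face $F$ has been replaced by a pyramid whose apex $v_F$ is a vertex of $Q$ and whose three side faces are faces of $Q$, hence isosceles. Let $T$ be the tetrahedron formed by $F$ as base and these three isosceles side faces; \autoref{lem:big-dihedral} supplies an edge $e$ of $F$ on which the dihedral of $T$ is strictly greater than $\pi/3$.

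The crucial remaining step is to bound that same dihedral above by the sharpness of $F$. Let $F'$ be the face of $K^{i+1}P$ sharing $e$ with $F$, and let $\theta_e$ be the dihedral of the $K^{i+1}P$-realization at $e$. In $Q$, the face $F'$ is likewise subdivided by a pyramid, so the two faces of $Q$ incident to $e$ are a side face from the pyramid over $F$ and a side face from the pyramid over $F'$. Going around $e$ through the interior of $Q$ traverses, in order, the interior of the pyramid over $F$, the interior of the realization of $K^{i+1}P$, and the interior of the pyramid over $F'$; summing the contributions, the dihedral of $Q$ at $e$ equals the dihedral of $T$ at $e$ plus $\theta_e$ plus the analogous pyramid dihedral at $e$ on the $F'$ side. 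Convexity of $Q$ at $e$ forces this sum to be at most $\pi$, and the third summand is nonnegative, so the dihedral of $T$ at $e$ is at most $\pi-\theta_e$, which is in turn at most the sharpness of $F$, hence strictly less than $2\varepsilon<\pi/3$. This contradicts \autoref{lem:big-dihedral}, so no such all-isosceles realization can exist.

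The only subtle point in the argument is this angle-arithmetic step: one has to notice that even though both $F$ and its neighbor $F'$ have been subdivided, convexity of $Q$ still pins the new pyramid dihedrals to the ``budget'' $\pi-\theta_e$ that was available along the edge in the prior realization. Everything else is a straightforward chaining of the preceding lemmas: \autoref{lem:unsharp} provides an unsharp face, \autoref{lem:has-obtuse} upgrades it to an obtuse unsharp face, and \autoref{lem:big-dihedral} forbids gluing an isosceles pyramid onto it while keeping the enclosing polyhedron convex, which is exactly what an all-isosceles realization of $K^jP$ would require.
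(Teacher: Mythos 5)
Your proposal is correct and follows essentially the same route as the paper: the same choice of $\varepsilon<\pi/6$ and $j=i+2$, peeling off the last Kleetope layer to get an obtuse face $F$ of sharpness less than $\pi/3$ via \autoref{lem:has-obtuse}, and then playing \autoref{lem:big-dihedral} against convexity at an edge of $F$. Your explicit angle-arithmetic at the edge $e$ (summing the pyramid dihedral, the old dihedral $\theta_e$, and the neighbor's pyramid dihedral to at most $\pi$) is just a more carefully spelled-out version of the paper's closing convexity contradiction.
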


\begin{proof}
Let $i$ be an integer large enough that, by \autoref{lem:unsharp}, every convex realization of $K^i P$ has a face of sharpness less than $\pi/6$, and let $j=i+2$. Let $P$ be an arbitrary triangulation. Consider any convex realization of $K^j P$, and the convex realization of $K^{j-1} P=K^{i+1} P$ obtained from it by deleting the last layer of added vertices. By \autoref{lem:has-obtuse}, the realization of $K^{j-1} P$ has an obtuse face $F$ of sharpness less than $\pi/3$. If the realization of $K^j P$ had all-isosceles faces, \autoref{lem:big-dihedral} (applied to the tetrahedron formed by $F$ and the three isosceles triangles that subdivide it) would show that one of these isosceles triangles lies in a plane making an angle greater than $\pi/3$ with the plane of~$F$. Because this angle is greater than the flatness of $F$, it is impossible for such a triangle to make a convex dihedral angle with whatever other triangle neighbors it across the edge it shares with~$F$, contradicting the assumption that the realization is convex. This contradiction shows that the assumption that the realization has only isosceles triangles as its faces cannot be true.
\end{proof}

\section{Non-convex realization of Kleetopes}
 
In this section we prove that, whenever $G$ is a triangulation, $KG$ has a non-convex realization in which all faces are isosceles. In particular, this is true of the polyhedra that we we constructed as examples that have no convex isosceles realization.

\begin{theorem}
Let $P$ be a convex polyhedron whose graph $G$ is a triangulation. Choose distance~$R$, for each face $f$ of $P$ place a new vertex  at distance $R$ from each of the three vertices of $f$, separated from $P$ by the plane of $f$, and replace $f$ by three triangles through the new vertex and two of the three vertices of $f$. Then for sufficiently large values of $R$, the result is a non-self-crossing realization of $KG$, in which all faces are isosceles triangles.
\end{theorem}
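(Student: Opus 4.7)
The new triangles are automatically isosceles: each has two sides of length $R$ from the apex $x_f$ to two vertices of the base face $f$, and a third side along an edge of $f$. The apex $x_f$ is well-defined whenever $R$ is at least the circumradius $r_f$ of $f$, in which case $x_f$ lies on the perpendicular to the supporting plane $\pi_f$ through the circumcenter $c_f$ at height $h_f=\sqrt{R^2-r_f^2}$ on the side opposite $P$. Assume $R\ge\max_f r_f$ so that every apex exists, and write $\mathrm{Pyr}_f=\mathrm{conv}(f\cup\{x_f\})$. Since $x_f$ is strictly on the outer side of $\pi_f$, each pyramid lies in the closed outer half-space bounded by $\pi_f$, and in particular no pyramid penetrates $P$. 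The substantive claim is that for $R$ large enough, any two distinct pyramids meet only in the combinatorial intersection of their base faces on $\partial P$.

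I would verify this pair-by-pair. For two faces $f_1,f_2$ sharing an edge $e$, take the plane $\Pi$ through $e$ with normal direction $n_{f_1}-n_{f_2}$. Writing $x_{f_i}=c_{f_i}+h_{f_i}n_{f_i}$, the inner product of $x_{f_i}-p_0$ (for any $p_0\in e$) with $n_{f_1}-n_{f_2}$ is dominated for large $h_{f_i}$ by the term $h_{f_i}n_{f_i}\cdot(n_{f_1}-n_{f_2})=\pm h_{f_i}(1-n_{f_1}\cdot n_{f_2})$, which has sign $+$ for $i=1$ and $-$ for $i=2$; the third vertex $v_3^{(i)}$ of $f_i$ lies in $\pi_{f_i}$ but not on the line $\pi_{f_1}\cap\pi_{f_2}$, so by convexity of $P$ it falls strictly on the $n_{f_i}$-side of $\Pi$ as well. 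Each pyramid is therefore contained in its own closed half-space bounded by $\Pi$, and they meet exactly along $e$.

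For the remaining cases (shared vertex only, or disjoint bases), I would argue by rescaling: as $R\to\infty$, the rescaled apex $x_f/R$ tends to the outward unit normal $n_f$, while the rescaled original vertices of $P$ stay bounded. Distinct faces of the convex polyhedron $P$ have distinct outward normals, so in the limit the pyramids collapse to disjoint segments from the origin to distinct unit vectors on the sphere. The main obstacle is the shared-vertex case: at a common vertex $v$ of $f_1,f_2$ one must construct a plane through $v$ that separates both the outward normals $n_{f_1},n_{f_2}$ and the tangent cones of $f_1,f_2$ at $v$. Existence of such a plane follows from the convex polyhedral structure at $v$---the tangent cone of $P$ at $v$ is a polyhedral cone whose $2$-dimensional faces correspond to the faces of $P$ incident to $v$, and non-edge-sharing faces correspond to non-adjacent $2$-faces that can be separated by a hyperplane through $v$---though arranging the simultaneous separation of tangent and normal data requires care. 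Once this is done, for $R$ large enough the pyramids are pairwise disjoint beyond the combinatorial intersection of their bases, and taking the maximum over the finitely many face pairs gives a single threshold $R^*$ beyond which $P\cup\bigcup_f\mathrm{Pyr}_f$ is a closed solid region bounded by the new triangles, yielding the desired non-self-crossing realization of $KG$.
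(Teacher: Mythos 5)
Your setup (existence and uniqueness of the apex for $R>\max_f r_f$, the immediate isosceles property, and containment of each pyramid in the outer half-space of its base plane) matches the paper, and your treatment of the edge-sharing case is sound: the plane $\Pi$ through $e$ with normal $n_{f_1}-n_{f_2}$ is exactly the bisector of the dihedral angle at $e$, and checking the four vertices of each pyramid against it is a correct, hands-on way to separate adjacent pyramids for large $R$.

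The gap is in the non-adjacent cases, and it is genuine. The rescaling argument cannot close them: the rescaled pyramids $\tfrac1R\mathrm{Pyr}_{f_i}$ converge in Hausdorff distance to the segments $[0,n_{f_i}]$, but these limit sets are \emph{not} disjoint---they all meet at the origin---so Hausdorff convergence only confines $\mathrm{Pyr}_{f_1}\cap\mathrm{Pyr}_{f_2}$ to a ball of radius $O(\varepsilon R)=O(1)$ around $P$ after undoing the rescaling. That bounded neighborhood of $P$ is precisely where the bases live and where an improper crossing would occur (for an obtuse face the apex is not even over the face, so the pyramid bulges sideways near its base), so the limit argument proves nothing there. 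You acknowledge that the shared-vertex case needs a plane through $v$ separating both the tangent wedges and the normal directions and that this ``requires care,'' but that construction is the actual content of the claim and is not carried out; the disjoint-base case has the same unaddressed difficulty. The paper sidesteps all of this with a single global device: each pyramid is trapped inside the Voronoi cell of its own face (the region of exterior space nearer to $f$ than to any other face, whose walls near $P$ are the dihedral-bisector planes you already identified). The quantitative condition is that every lateral face makes a dihedral angle less than $\pi-\theta/2$ with $f$, where $\theta<\pi$ is the largest dihedral angle of $P$; this holds for large $R$ because the apex lies at height at least $R-r_f$ over a point within distance $r_f$ of every edge of $f$, forcing that dihedral to be $\pi/2+O(r_f/R)$. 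Since distinct cells have disjoint interiors, all pairs---adjacent, vertex-sharing, and disjoint---are handled at once. To repair your proof, either adopt that cell-containment argument or explicitly construct the separating planes for the two remaining pair types.
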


\begin{proof}
Given any triangle $f$ in three-dimensional space, two points at distance $R$ from the three vertices of $f$ will exist whenever $R$ is greater than the circumradius $r_f$ of $f$. These two points are symmetric from each other across the plane of $f$, and the constraint that this plane separates each new point from $P$ uniquely determines the placement used in the theorem. Therefore, whenever $R$ is greater than $\max_f r_f$, the construction given in the statement of the theorem is well-defined. Each new face is a triangle with two sides of length $R$, so the isosceles shape of the faces is immediate. The nontrivial part of the theorem is the claim that the resulting system of triangles is non-self-crossing.

The new point $p$ for $f$ lies on a line through the circumcenter of $f$, perpendicular to $f$. The circumcenter has distance $\le r_f$ from each edge of $f$, and the new point is at distance at least $R-r_f$ away from the plane of $f$, by the triangle inequality. It follows that, although the dihedral angle between $f$ and each triangle incident to $p$ may be obtuse (as happens when $f$ itself is obtuse), it may not be very obtuse: it is $\pi/2+O(r_f/R)$. By choosing $R$ sufficiently large, we may make this angle as close to $\pi/2$ as we desire.

We observe that the Voronoi diagram of the faces of $P$ (the partition of the space outside $P$ into cells within which one face is nearer than all others) has cells bounded by the planes bisecting the dihedral angles of $P$. Let $\theta<\pi$ be the largest dihedral angle of $P$. Then, as long as we choose $R$ large enough that each dihedral angle between a new triangle and the face $f$ that it replaces is less than $\pi-\theta/2$, each new point and its incident triangles will lie within the Voronoi cell of the face that it replaces. Since these Voronoi cells are disjoint, the new triangles cannot cross each other.
\end{proof}

\autoref{fig:kkoct} (right) depicts a non-convex isosceles realization of the iterated Kleetope of the octahedron, obtained by this construction.

\section{Congruent isosceles faces}

\begin{figure}[t]
\centering\includegraphics[width=0.3\textwidth]{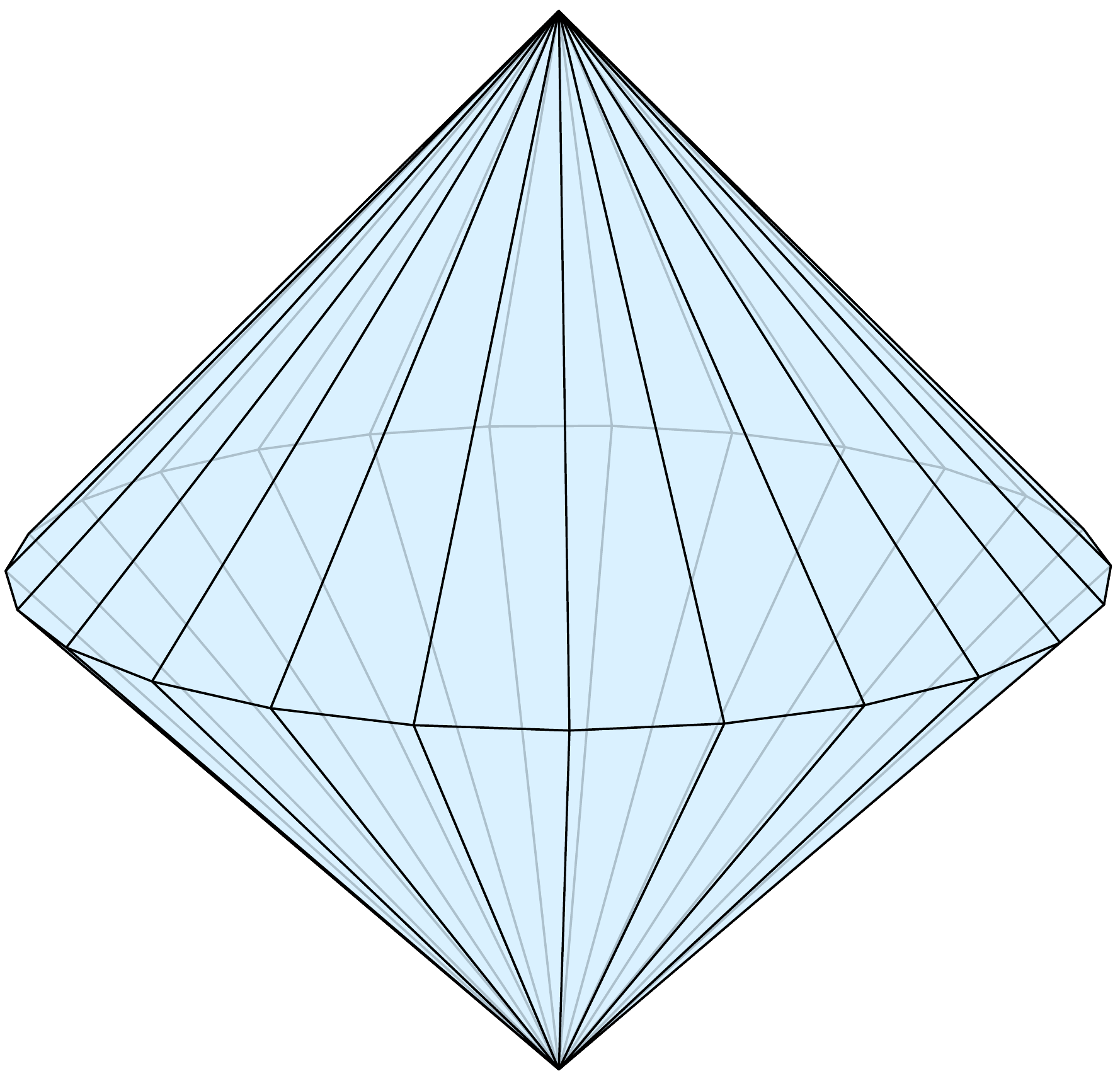}\qquad\includegraphics[width=0.65\textwidth]{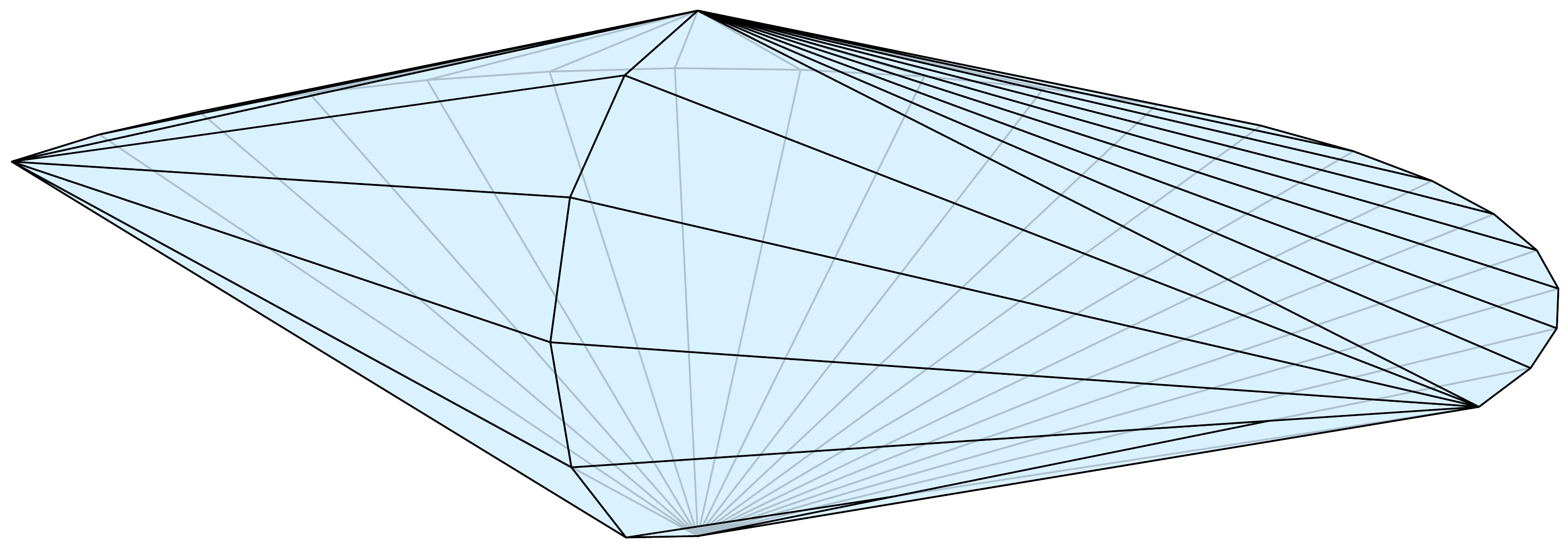}
\caption{Left: Bipyramid. Right: Biarc hull.}
\label{fig:bipyramid}
\end{figure}

\begin{figure}[t]
\centering\includegraphics[width=0.45\textwidth]{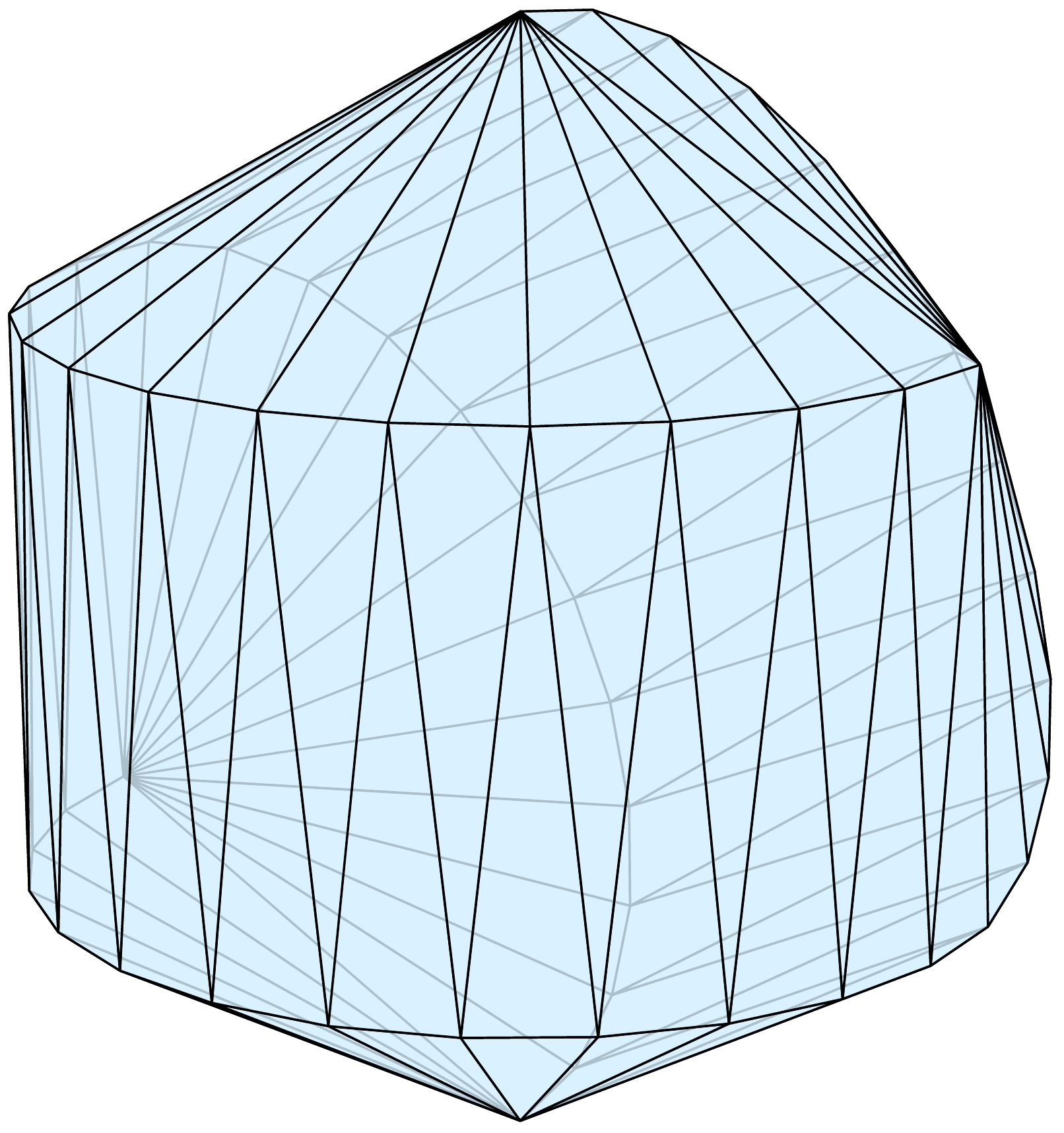}
\caption{An order-25 twisted gyroelongated bipyramid.}
\label{fig:twisted}
\end{figure}

There are only eight convex polyhedra with congruent equilateral triangle faces, called \emph{deltahedra}, but infinitely many non-convex deltahedra~\cite{FreVDW-SS-47,Tri-MM-78}. Generalizing this concept, Malkevitch~\cite{Mal-COMAP-19} asks about convex polyhedra with congruent isosceles triangle faces. He states that there are three known infinite families of these monohedral isosceles convex polyhedra (leaving as a puzzle for the reader finding these families) and poses as an open question whether there are any more. We answer this question by identifying four families:

\begin{itemize}
\item The \emph{bipyramids} are convex hulls of equally spaced points on a circle, together with two additional points on the axis through the center of the circle, equally far from it on each side (\autoref{fig:bipyramid}, left).
\item Let $X$ and $Y$ be two arcs of circles, each equidistant from the endpoints of the other arc, such that both arcs lie on the boundary of the convex hull of their union. Then the convex hull of any finite set of points on both arcs that includes the endpoints of each arc consists of equilateral triangles whose base connects two consecutive points on one arc and whose apex is on the other arc. For any two integers $x\ge 2$ and $y\ge 2$ it is possible to choose two arcs, such that $x+1$ equally-spaced points on arc $X$ and $y+1$ equally-spaced points on arc $Y$ all have the same distance between consecutive points; for instance, as in \autoref{fig:bipyramid} (right), this can be done by selecting $X$ and $Y$ to be concentric semicircles on perpendicular planes with a carefully-chosen ratio of radii. The convex hull of the resulting point set has $2(x+y)$ congruent isosceles triangles as faces; we call this shape a \emph{biarc hull}.
\item Goldberg~\cite{Gol-AMM-36} constructed an infinite family of convex polyhedra with congruent isosceles faces by attaching pyramids to opposite faces of an antiprism (\autoref{fig:grunbaum}, left). For an antiprism over a $k$-gon, the result has $4k$ faces. The cases $k=4$ and $k=5$ can also be realized by equilateral triangle faces, producing a gyroelongated square bipyramid and regular icosahedron respectively. As the name ``Goldberg polyhedra'' already refers to an unrelated class of polyhedra, we call the general case an \emph{order-$k$ gyroelongated bipyramid}.
\item In a gyroelongated bipyramids of odd order $k$, the faces can be partitioned into two subsets, separating each pair of opposite faces. For certain realizations of these polyhedra, the edges separating these two subsets form a non-planar hexagon with order-6 symmetry. Twisting one of the two subsets by an angle of $2\pi/3$ leaves this hexagon in the same position and results in another polyhedron, which we call the order-$k$ twisted gyroelongated bipyramid (\autoref{fig:twisted}). We detail the construction below.
\end{itemize}

\subsection{Twisted gyroelongated bipyramids}

To understand the twisted gyroelongated bipyramids, it is helpful to into a little more detail about Goldberg's method for realizing order-$k$ gyroelongated bipyramids in such a way that all faces are congruent and isosceles. In Goldberg's construction, all of the points lie on a unit sphere, with the two pyramid apexes at its poles and the other points spaced equally around two circles of equal and opposite latitude. Goldberg chooses these latitudes so that the spherical angles of the great-circle arcs on the sphere, connecting the points in the same pattern as the polyhedron, are $2\pi/k$ at the apex of each triangle and $(1/2-1/2k)\pi$ at each base angle. In this way, $k$ of these spherical triangles meet with total angle $2\pi$ at each pole, and the five triangles meeting at each remaining point have total angle $(2/k+4(1/2-1/2k))\pi=2\pi$. The angular excess of each triangle, the amount by which its total angle exceeds $\pi$, is $\pi/k$; for spherical triangles, the angular excess equals the area, so the area of each triangle is exactly $1/4k$ times the surface area of the entire sphere. The subdivision of the sphere into spherical triangles of this shape can be transformed into a polyhedron by taking the convex hull of its vertices, and (as is true whenever a spherical triangulation consists only of acute triangles) the combinatorial structure of the resulting convex hull is the same as the spherical triangulation. In this way, one obtains a gyroelongated bipyramid with congruent isosceles faces that is, moreover, both inscribed in the unit sphere and circumscribes another smaller sphere. The polyhedron illustrated in \autoref{fig:grunbaum} (left) was constructed in this way.

Now, divide the resulting polyhedron into two polyhedral surfaces by a non-planar hexagon through both poles, separating each face from its opposite face. The same subdivision into two surfaces can also be made on the spherical triangulation from which the polyhedron was derived. On the surface of the sphere, at each pole, the hexagon subdivides $(k-1)/2$ spherical triangles on one side from $(k+1)/2$ spherical triangles on the other side, so its two edges make an angle of $(1-1/k)\pi$ with respect to each other at the poles. At each non-pole vertex, the hexagon subdivides two spherical triangles (meeting at their base) on one side from three spherical triangles (at two base angles and an apex) on the other, so its angle is $2(1/2-1/2k)\pi=(1-1/k)\pi$. That is, this hexagon lies on a sphere, has equal edge lengths, has equal angles, and splits the sphere into two equal areas. It follows that it has order-6 symmetry: it can be twisted by an angle of $2\pi/3$ or $4\pi/3$, or flipped and then twisted, to produce the same hexagon. Applying the twist by an angle of $2\pi/3$ to half of the triangles in Goldberg's spherical triangulation (leaving the other half fixed) produces another spherical triangulation with congruent isosceles spherical triangles, which was described in 2005 by Robert Dawson~\cite{Daw-RB-05}.

Dawson's spherical triangulation can be transformed into a monohedral isosceles polyhedron by taking the convex hull of its vertices as in Goldberg's construction. This produces a new polyhedron, the twisted gyroelongated bipyramid, for every odd $k>5$. For $k=3$ it produces a triangulation of a cube by some of the diagonals of its squares (with all faces right and isosceles, but not forming a strictly convex polyhedron) and for $k=5$ the twisting process leaves the regular icosahedron unchanged. In relation to the work of Dawson, we remark that (for any $n\ge 2$) it is also possible to tile the sphere by $4n$ isosceles spherical triangles with base angle $\pi/n$ and apex angle $\pi(1-1/n)$, meeting edge-to-edge in the pattern of a bipyramid. (This is the spherical analog of the non-convex polyhedron depicted in \autoref{fig:nwb}.) When $n$ is odd, the resulting spherical triangulation again has a skew hexagon with order-6 symmetry, and twisting one side of this skew hexagon by an angle of $2\pi/3$ leads to another spherical triangulation with congruent isosceles spherical triangles, not listed by Dawson. However, this construction does not directly lead to a convex polyhedron, because the spherical triangles are not obtuse, and in \autoref{thm:nwb-sharp} we will see more strongly that (with finitely many exceptions) there is no convex polyhedron with isosceles triangles in this pattern.

We note that removing a small number of strips of four triangles, from pole to pole in one of the two halves of the twisted gyroelongated bipyramid, but preserving the geometric shape of the remaining triangles, results in a polyhedral surface that is less symmetric (there are different numbers of triangles on one of the two halves than the other) but still obeys the conditions of Alexandrov's uniqueness theorem characterizing the metric spaces derived from the surfaces of polyhedra~\cite{Ale-05}: it is topologically spherical and locally Euclidean except at a finite number of points of positive angular defect, with total angular defect $4\pi$. It follows from Alexandrov's theorem that the result is the surface of a convex polyhedron, but the theorem does not tell us whether the edges of the polyhedron follow the edges of the triangles from which the surface was derived, or are rearranged into a different combinatorial structure. If the edges of the resulting polyhedron coincide with the edges of the triangles, the result would be a generalized asymmetric form of the twisted gyroelongated bipyramid, with different numbers of congruent isosceles triangles on each of its two sides. We leave as open for future research whether this generalized twisted gyroelongated bipyramid actually exists.

\subsection{Sharp angles}

The known infinite families of monohedral isosceles convex polyhedra all have very sharp apex angles when they have large numbers of faces. This is not a coincidence: as we show, in such a polyhedron with $n$ faces, the apex angle must be $O(1/n)$.

Our proof depends on two ideas: first, that in any convex polyhedron, the total angular deficit (the amount by which the sum of angles at each vertex fall short of $2\pi$, summed over all vertices) is exactly $4\pi$; this is Descartes' theorem on total angular defect, a discrete form of the Gauss--Bonnet theorem. Second, in a monohedral isosceles convex polyhedron, there are (usually) only a small number of distinct types of vertex, as follows:
\begin{itemize}
\item If a vertex has only the apexes of isosceles triangles incident to it, we call it a \emph{pyramidal vertex}.
\item If a vertex has two bases of isosceles triangles and one or more apexes incident to it, we call it a \emph{semipyramidal vertex}.
\item If a vertex has only four bases of isosceles triangles incident to it, we call it a \emph{basic vertex}.
\item If a vertex has four bases and one apex incident to it, we call it a \emph{semibasic vertex}.
\end{itemize}

As a shorthand, we call a convex polyhedron with congruent isosceles faces whose vertices are only of these four types \emph{well-behaved}.

\begin{observation}
At any vertex of a monohedral isosceles convex polyhedron, the number of incident base angles of faces is even, and these angles can be grouped together in pairs, adjacent across a shared base edge of two isosceles triangles.
\end{observation}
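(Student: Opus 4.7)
The plan is to exploit the fact that in a monohedral isosceles polyhedron every edge has one of only two possible lengths, and to use the length of an edge to identify which face-angles sit next to it. Write $\ell_b$ for the base length and $\ell_s$ for the length of each of the two equal sides of the common isosceles face, and call an edge of the polyhedron a \emph{base edge} if it has length $\ell_b$ and a \emph{leg edge} if it has length $\ell_s$. Assume for now that $\ell_b\ne\ell_s$, i.e.\ that the common face is not equilateral; I will return to the equilateral case at the end.

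Within a single face, the base edge is opposite the apex angle, and its two endpoints are exactly the two base-angle vertices of that face. Consequently, each face incident to a polyhedron vertex $v$ has one of two types at $v$: either it contributes an apex angle, in which case both of its edges at $v$ are leg edges, or it contributes a base angle at $v$, in which case exactly one of its two edges at $v$ is a base edge (the base of that face) and the other is a leg edge. The first step of the proof is to record this dichotomy.

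The second step, which is the heart of the argument, is the following local observation: if $e$ is a base edge incident to $v$, then both of the two faces sharing $e$ have $e$ as their base, so the angle they each contribute at $v$ is a base angle. Hence each base edge at $v$ is flanked on both sides by faces contributing base angles at $v$, and these two base angles are precisely the pair we want. Conversely, each base angle at $v$ lies in a face with exactly one base edge at $v$, and so is adjacent to exactly one such edge. This sets up a bijection between base angles at $v$ and pairs (base edge at $v$, side of that edge), which matches base angles into pairs across shared base edges; in particular the number of base angles at $v$ is twice the number of base edges incident to $v$, and so is even.

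The only real subtlety is the equilateral case $\ell_b=\ell_s$, where the notions of ``base'' and ``apex'' are not intrinsically defined on a face. I would handle this by observing that in such a polyhedron one can designate any consistent labeling of one edge per face as the ``base,'' and the entire argument above goes through with this choice; alternatively, one can note that the vertex-type taxonomy preceding the observation already fixes which angles are being called base angles at each vertex, and the pairing argument then depends only on that choice together with the shared-edge analysis above. I do not expect any real obstacle beyond this book-keeping.
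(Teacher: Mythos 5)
The paper states this observation without any proof, so there is nothing to compare against; your edge-length argument is the natural justification, and its main case is correct. When the common face is not equilateral, an edge of base length must be the base of \emph{both} faces containing it, so each base edge at $v$ is flanked by two base angles, and each base angle at $v$ lies in a face whose (unique) base edge is incident to $v$; this is exactly the claimed pairing. The one weak point is the equilateral case. An \emph{arbitrary} per-face designation of one edge as the ``base'' does not work: a face $F_1$ may designate a shared edge $e$ as its base while the neighboring face $F_2$ designates a different edge, leaving the base angle of $F_1$ at an endpoint of $e$ with no partner across $e$. What you need is a designation in which every chosen base edge is the base of \emph{both} of its incident faces, i.e.\ a perfect matching in the dual graph; this exists by Petersen's theorem, since the dual of a polyhedral triangulation is a bridgeless cubic graph, but that is an actual argument rather than book-keeping. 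In the context of the paper this degenerate case is harmless in any event: the convex polyhedra with congruent equilateral faces are the eight deltahedra, finitely many, and the observation is only used downstream for asymptotic statements.
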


\begin{observation}
\label{obs:few-bases}
In a monohedral isosceles convex polyhedron, a vertex incident to four or more base angles can be incident to at most one apex angle, so the well-behaved vertex types exhaust the possible vertices that can be incident to zero, two, or four base angles.
\end{observation}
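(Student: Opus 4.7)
The plan is to use the strict angle-sum inequality at each vertex of a convex polyhedron, combined with the defining identity $\alpha + 2\beta = \pi$ between the apex angle $\alpha$ and base angle $\beta$ of the common isosceles triangle. At any vertex $v$ of a convex polyhedron, the sum of the face angles incident to $v$ is strictly less than $2\pi$; if $v$ is incident to $a$ apex angles and $b$ base angles, this reads $a\alpha + b\beta < 2\pi$.

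For the first claim, the key computation is that if $b \ge 4$ and $a \ge 2$ both hold, then
\[
a\alpha + b\beta \;\ge\; 2\alpha + 4\beta \;=\; 2(\alpha + 2\beta) \;=\; 2\pi,
\]
contradicting strict convexity. Hence any vertex incident to four or more base angles is incident to at most one apex angle, establishing the first half of the observation.

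For the second claim, I would enumerate the admissible even values $b \in \{0, 2, 4\}$ (evenness coming from the preceding observation), using also the elementary fact that at least three faces meet at every vertex of a polyhedron. When $b = 0$, we need $a \ge 3$, so $v$ is pyramidal. When $b = 2$, we need $a \ge 1$ (otherwise only two faces would be incident to $v$), so $v$ is semipyramidal. When $b = 4$, the first claim gives $a \le 1$, with $a = 0$ producing a basic vertex and $a = 1$ a semibasic vertex. These are exactly the four well-behaved vertex types, which therefore exhaust the vertices with $b \in \{0,2,4\}$.

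There is no substantial obstacle here: the whole statement reduces to one tight angle-sum inequality plus a short case split on the parity-restricted values of $b$. The only point requiring care is to invoke $\alpha + 2\beta = \pi$ so that the bound $2\alpha + 4\beta = 2\pi$ is an exact equality, since it is precisely this tightness that allows strict convexity to rule out the borderline pair $(a,b) = (2,4)$.
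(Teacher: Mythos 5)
Your proposal is correct and follows essentially the same route as the paper: the first claim is exactly the paper's observation that two apex angles plus four base angles already sum to $2\pi$ (via $\alpha+2\beta=\pi$), which is too much for a vertex of a convex polyhedron, and the second claim is the same short enumeration of $b\in\{0,2,4\}$ against the four named vertex types. Your version is slightly more explicit in writing out the identity $2\alpha+4\beta=2(\alpha+2\beta)=2\pi$ and in noting that at least three faces must meet at a vertex, but there is no substantive difference.
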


\begin{proof}
A vertex of a monohedral isosceles convex polyhedron with four or more base angles can have only one incident apex angle, because four base angles and two apex angles together add to $2\pi$, too much for the vertex of a convex polyhedron. The well-behaved vertex types allow arbitrarily many apex angles together with zero or two base angles, and zero or one apex angles with four base angles, so they include all the vertices that have at most four base angles and obey this constraint.
\end{proof}

\begin{lemma}
\label{lem:nwb-dihedral}
In a semipyramidal vertex incident to exactly two apex angles of isosceles triangles, with all incident triangles congruent and having apex angle greater than $\pi/2$, the dihedral angle between the two incident apex angles is less than $\pi/2$.
\end{lemma}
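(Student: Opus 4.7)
The plan is to work on the unit sphere centered at the semipyramidal vertex $v$ and deduce the bound from a single application of the spherical triangle inequality.

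First I would fix notation at $v$. By the preceding observation, the two base angles at $v$ meet across a single shared base edge, so the cyclic arrangement of faces at $v$ is (base, base, apex, apex), and the four edges at $v$ split into three kinds: one base edge, one leg $e$ shared by the two apex-sided triangles, and two ``mixed'' legs, each shared between one apex-sided and one base-sided triangle. Let $\hat{e}$, $\hat{z}$, $\hat{u}_3$, $\hat{u}_4$ denote the unit tangent vectors at $v$ along $e$, the base edge, and the two mixed legs, respectively, and let $\theta$ be the dihedral at $e$ that the lemma asks about.

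Next I would record two dot products forced by the face angles at $v$. Both apex-sided triangles have apex at $v$ and share $e$, so $\hat{u}_3$ and $\hat{u}_4$ each make angle $\alpha$ with $\hat{e}$, and their projections onto the plane perpendicular to $\hat{e}$ differ by angle $\theta$, giving
\[
\hat{u}_3\cdot\hat{u}_4 \;=\; \cos^2\alpha + \sin^2\alpha\cos\theta.
\]
Since $v$ is a base vertex of each base-sided triangle, with the base edge and one mixed leg as its two edges of that triangle, the face angles at $v$ yield $\hat{z}\cdot\hat{u}_3=\hat{z}\cdot\hat{u}_4=\cos\beta$, where $\beta=(\pi-\alpha)/2$.

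The decisive step is the spherical triangle inequality applied to $\hat{u}_3$, $\hat{z}$, $\hat{u}_4$ on the unit sphere at $v$: it implies $\angle(\hat{u}_3,\hat{u}_4)\le 2\beta=\pi-\alpha$, equivalently $\hat{u}_3\cdot\hat{u}_4\ge-\cos\alpha$. Combining with the identity above and dividing by $\sin^2\alpha=(1-\cos\alpha)(1+\cos\alpha)$ rearranges to
\[
\cos\theta \;\ge\; \frac{-\cos\alpha}{1-\cos\alpha}.
\]
The right-hand side is strictly positive precisely when $\alpha>\pi/2$, so $\cos\theta>0$ and $\theta<\pi/2$. The one step that requires care is the initial bookkeeping identifying the two mixed legs and verifying that $\hat{z}$ lies at spherical distance exactly $\beta$ from each of $\hat{u}_3,\hat{u}_4$; once that is in place, the bound is a one-line consequence of the triangle inequality together with the cone dot-product identity, and the hypothesis $\alpha>\pi/2$ is used exactly once, to force the final lower bound on $\cos\theta$ to be strictly positive.
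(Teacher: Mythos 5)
Your proposal is correct and follows essentially the same route as the paper: both arguments rest on the cone dot-product identity for the angle between the two outer legs and on the fact that the two base angles at the vertex can together span at most $\pi-\alpha$ (the spherical triangle inequality through the shared base edge). The only difference is that you keep the dihedral as a variable and solve the resulting inequality directly, which yields the explicit bound $\cos\theta\ge -\cos\alpha/(1-\cos\alpha)$ and dispenses with the paper's implicit monotonicity step (evaluating the identity only at dihedral $\pi/2$ and arguing that the dihedral ``must be smaller'').
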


\begin{proof}
For two isosceles triangles with apex angle $\theta>\pi/2$ that meet on a shared side and lie on perpendicular planes, the angle between their other two sides can be calculated as $\cos^{-1}(\cos^2(\pi-\theta))>\pi-\theta$. However, two base angles of the same triangle can together span an angle of at most $\pi-\theta$, not enough to connect two apex angles with this dihedral. Therefore, in order to form a connected surface at the vertex, the dihedral must be smaller.
\end{proof}

Monohedral isosceles convex polyhedra with more than four base angles at some vertices do exist. For instance, the Kleetopes of the regular tetrahedron, octahedron, or icosahedron have six, eight, or ten bases meeting at some vertices respectively. However, there are only finitely many such cases:

\begin{theorem}
\label{thm:nwb-sharp}
There are only a finite number of triangulations that can be realized as non-well-behaved monohedral isosceles convex polyhedra.
\end{theorem}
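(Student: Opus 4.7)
The plan is to show that the existence of a non-well-behaved vertex forces the apex angle $\alpha$ above $\pi/3$, and then via an Euler-type combinatorial identity together with Descartes' theorem to bound the count of each vertex type, hence the number of faces $n$. The first step is immediate: by \autoref{obs:few-bases}, a non-well-behaved vertex has at least $2m\ge 6$ base angles and at most one apex angle, so the convexity constraint $2m\beta+k\alpha<2\pi$ with $k\le 1$ rearranges to $\alpha>\pi/3$.

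Counting base angles in two ways --- by faces yields a total of $2n$; by vertex types, with $P$, $S$, $B$, $SB$, $NW$ the counts of pyramidal, semipyramidal, basic, semibasic, and non-well-behaved vertices, yields $2S+4(B{+}SB)+\sum_{v\in NW}2j_v$ --- and then using Euler's identity $V=n/2+2$ for a triangulation yields the combinatorial identity
\[
P+\tfrac{S}{2}\;=\;2+\tfrac{1}{2}\sum_{v\in NW}(j_v-2)\;\ge\;2+\tfrac{|NW|}{2}.
\]
This shows that non-well-behaved vertices must be ``paid for'' by pyramidal or semipyramidal ones, so bounds on $P$ and $S$ bound $|NW|$ and the total non-well-behaved base count $2\sum_{v\in NW}j_v$. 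Since $\alpha>\pi/3$, each basic vertex has angular defect $2\alpha>2\pi/3$ and each semibasic vertex has defect $\alpha>\pi/3$, so Descartes' theorem ($\sum d_v=4\pi$) immediately gives $|B|\le 6$ and $|SB|\le 12$.

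The main obstacle is bounding $P$ and $S$, since pyramidal $k$-apex and semipyramidal $k$-apex vertices can carry arbitrarily small angular defect when $\alpha$ is close to a critical value of the form $2\pi/k$ or $\pi/(k-1)$, so Descartes alone does not suffice. I would split on $\alpha$: for $\alpha>\pi/2$, \autoref{lem:nwb-dihedral} forces every semipyramidal vertex with two apex angles to have an acute dihedral between its two apex faces, and combining this geometric restriction with the requirement that the $2m$ triangles around the non-well-behaved vertex form a consistent convex fan should bound $P$ and $S$; for $\pi/3<\alpha\le\pi/2$ with $\alpha$ bounded away from the critical values $2\pi/5$ and $\pi/3$, the pyramidal defect $2\pi-k\alpha$ and semipyramidal defect $\pi-(k-1)\alpha$ are each bounded below by an explicit positive constant, giving bounds via Descartes, while in the shrinking $\alpha$-intervals around those critical values the triangles become near-equilateral and the combinatorial type should be forced by a perturbation argument to lie in a bounded family of small perturbations of the eight convex deltahedra. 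In every case the counts $P$, $S$, $B$, $SB$, $|NW|$ are bounded by absolute constants, so $V=n/2+2$ is bounded, $n\le n_0$ for some constant, and only finitely many triangulations have at most $n_0$ vertices, proving the theorem.
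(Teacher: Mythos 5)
Your opening step and your bookkeeping are sound: a non-well-behaved vertex does force the apex angle $\alpha>\pi/3$, the double count of base angles combined with Euler's formula does yield $P+\tfrac{S}{2}=2+\tfrac12\sum_{v\in NW}(j_v-2)$, and Descartes' theorem does bound $B$ and $SB$ once $\alpha>\pi/3$. That identity is a genuinely nicer way to organize the counting than the paper's (which instead argues that $\Omega(n)$ vertices must be pyramidal or semipyramidal and then derives a contradiction). But the theorem is not proved, because the step you yourself flag as ``the main obstacle'' --- controlling pyramidal and semipyramidal vertices whose angular defect can be made arbitrarily small --- is exactly where the content of the proof lives, and neither of your proposed resolutions works as stated. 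The dangerous critical apex angles are $2\pi/5$, $\pi/2$, $2\pi/3$ (pyramidal vertices with $5$, $4$, $3$ apexes, resp.\ semipyramidal with $3$) and $\pi$ (semipyramidal with $2$ apexes); at none of these is the face near-equilateral (at $\alpha=2\pi/5$ it is a $72^\circ$--$54^\circ$--$54^\circ$ triangle, at $\alpha=2\pi/3$ a $120^\circ$--$30^\circ$--$30^\circ$ one), so a perturbation argument anchored at the eight deltahedra has no base point there. Near-equilateral faces occur only as $\alpha\to\pi/3$, where every admissible vertex type already has defect bounded away from zero and Descartes alone suffices. In the $\alpha>\pi/2$ branch, ``should bound $P$ and $S$'' is a hope rather than an argument.

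The missing idea, which is how the paper closes these cases, is to exploit the global $2{:}1$ ratio of base angles to apex angles. At each critical value of $\alpha$ the near-flat vertex type consumes at least as many apex angles as base angles, so $\Omega(n)$ base angles must be absorbed by vertices of other types; for $\alpha$ near $2\pi/5$, $\pi/2$, or $2\pi/3$ every other vertex type has angular defect bounded away from zero \emph{and} (because $\beta$ is then bounded below) absorbs only $O(1)$ base angles, so Descartes caps the number of such vertices, hence the absorbed base angles, at $O(1)$, forcing $n=O(1)$. The one case this does not cover is $\alpha\to\pi$, where $\beta\to 0$ and a single vertex could absorb $\Omega(n)$ base angles with negligible defect; there a separate geometric argument is required, namely that such a vertex has many pairs of consecutive base angles meeting along a shared side, each of which forces (via \autoref{lem:nwb-dihedral}) a dihedral angle less than $\pi/2$ at that vertex, while a vertex of a convex polyhedron admits at most three such dihedrals. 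Without an argument of this kind at each critical value, your case split leaves the theorem open precisely where it is hardest.
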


\begin{proof}
Let $P$ be such a polyhedron, with $n$ faces.  By \autoref{obs:few-bases}, $P$ must have a vertex $v$ at which six or more base angles meet. Therefore, the base angle must be less than $\pi/3$, from which it follows that the apex angle must be greater than $\pi/3$.

In $P$, as in any icosahedral polyhedron) the number of base angles of faces is exactly twice the number of apex angles of faces, and because the large apex angle allows at most five apex angles to meet at a vertex, there must be $\Omega(n)$ faces with at least half as many apex angles as base angles. According to \autoref{obs:few-bases}, such a vertex can only be pyramidal or semi-pyramidal, with at most five incident faces. In order to have total angular defect $4\pi$, these faces need to have angular defect $O(1/n)$. We distinguish the following cases:
\begin{itemize}
\item A semipyramidal vertex with one incident apex angle has angular defect exactly $\pi$, which can only be $O(1/n)$ if $n$ is itself bounded.
\item If there are $\Omega(n)$ semipyramidal vertices with two incident apex angles, then to achieve angular defect $O(1/n)$ for each of these vertices, the apex angle must be $\pi-2\delta$ and the base angle must be $\delta$ for some $\delta=O(1/n)$.

Because these vertices use equal numbers of apex and base angles, but the faces of the polyhedron have twice as many base angles as apex angles, there must also be a linear number of base angles that participate in other vertices with zero or one incident apex angles. If there are $k$ of these other vertices, then each apex angle contributes less than $\pi$ to the total angle of each of these other vertices, and all of the base angles together contribute $O(1)$ to the total angle, so the total angle is $\le k\pi+O(1)$, but in order to achieve total angular deficit at most $4\pi$ the total angle must be $\ge (2k-4)\pi$. Combining these two inequalities shows that $k=O(1)$ and that, therefore, at least one of these other vertices has $\Omega(n)$ incident base angles.

At a vertex with $\Omega(n)$ incident base angles, each pair of consecutive base angles that meet along a side of their faces (rather than sharing a base edge) must have a semipyramidal vertex with two incident apex angles at the other end of that side. When $n$ is large enough that the apex angle is greater than $\pi/2$, by \autoref{lem:nwb-dihedral}, the dihedral angle between these two consecutive base angles must be less than $\pi/2$. But a vertex of a convex polyhedron can have at most three dihedral angles less than $\pi/2$, so $n$ must be bounded.

\item If there are $\Omega(n)$ pyramidal vertices with three incident apex angles, then to achieve angular defect $O(1/n)$ for each of these vertices, the apex angle must be $2\pi/3-2\delta$ and the base angle must be $\pi/6+\delta$ for some $\delta=O(1/n)$. However, the base angles of the faces must also appear in a linear number of non-pyramidal vertices, and for these apex and base angles, all other vertex types have angular deficit at least $\pi/6-O(\delta)$, bounded away from zero. Therefore, to keep the total angular defect below $4\pi$, $n$ must be bounded.
\item If there are $\Omega(n)$ pyramidal vertices with four incident apex angles, or semipyramidal vertices with three incident apex angles, then to achieve angular defect $O(1/n)$ for each of these vertices, the apex angle must be $\pi/2-2\delta$ and the base angle must be $\pi/4+\delta$ for some $\delta=O(1/n)$. Because vertices of these types use fewer than twice as many base angles as apex angles, the base angles of the faces must also appear in a linear number of vertices of other types, and for these apex and base angles, all other vertex types have angular deficit at least $\pi/4-O(\delta)$, bounded away from zero. Therefore, to keep the total angular defect below $4\pi$, $n$ must be bounded.
\item If there are $\Omega(n)$ pyramidal vertices with five incident apex angles, then to achieve angular defect $O(1/n)$ for each of these vertices, the apex angle must be $2\pi/5-2\delta$ and the base angle must be $3\pi/10+\delta$ for some $\delta=O(1/n)$. Because vertices of these types use fewer than twice as many base angles as apex angles, the base angles of the faces must also appear in a linear number of vertices of other types, and for these apex and base angles, all other vertex types have angular deficit at least $\pi/5-O(\delta)$, bounded away from zero. Therefore, to keep the total angular defect below $4\pi$, $n$ must be bounded.
\item There can be no pyramidal vertices with six or more incident apex angles, or semipyramidal vertices with four or more incident apex angles, because the total angle of such a vertex would be greater than $2\pi$.
\end{itemize}
No case allows $n$ to be arbitrarily large, so it must be bounded, and the number of triangulations with $n$ faces must also be bounded.
\end{proof}

\begin{figure}[t]
\centering\includegraphics[width=0.65\textwidth]{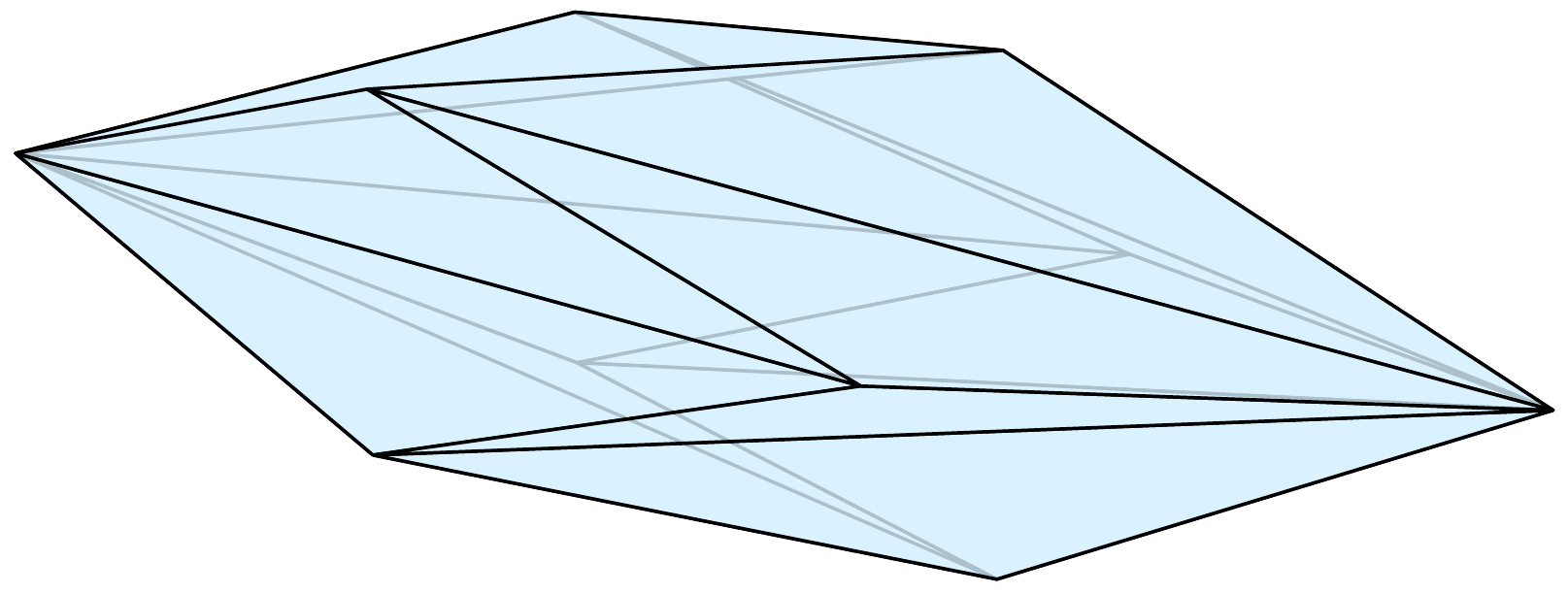}
\caption{A polyhedron with 16 faces, part of an infinite family of non-well-behaved non-convex monohedral isosceles bipyramids in which every vertex has positive angular deficit.}
\label{fig:nwb}
\end{figure}

We remark that this result applies only to convex realizations. There exist non-convex non-well-behaved monohedral isosceles polyhedra, combinatorially equivalent to bipyramids, with arbitrarily many faces. It is even possible to realize these polyhedra in such a way that all vertices have positive angular defect (\autoref{fig:nwb}).

\begin{theorem}
\label{thm:wb-sharp}
A monohedral isosceles convex polyhedron with $n$ faces must have apex angle $O(1/n)$.
\end{theorem}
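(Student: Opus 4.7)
The plan is to use \autoref{thm:nwb-sharp} to reduce to the well-behaved case, then combine Descartes' total-defect theorem with Euler's formula and the vertex-type classification to derive the $\alpha$-free identity $2p+s=4$ (where $p,s$ count pyramidal and semipyramidal vertices), and finally apply pigeonhole to locate a pyramidal or semipyramidal vertex carrying $\Omega(n)$ apex angles; convex positivity of its angular defect then forces $\alpha=O(1/n)$.

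First I would appeal to \autoref{thm:nwb-sharp}: only finitely many triangulations are realizable as non-well-behaved monohedral isosceles convex polyhedra, so I may assume the realization is well-behaved. Let $\alpha$ be the apex angle, $\beta=(\pi-\alpha)/2$ the base angle, and $p,s,b,c$ the numbers of pyramidal, semipyramidal, basic, and semibasic vertices. A direct calculation gives the angular defect at each vertex type as $2\pi-k\alpha$ (pyramidal with $k$ apex angles), $\pi-(k-1)\alpha$ (semipyramidal with $k$ apex angles), $2\alpha$ (basic), and $\alpha$ (semibasic).

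I would then set up three counting identities. Letting $K_p,K_s$ denote the total apex-angle counts at pyramidal and semipyramidal vertices, one apex angle per face gives $K_p+K_s+c=n$; two base angles per face gives $s+2b+2c=n$; and Euler's formula together with $2E=3F$ gives $p+s+b+c=n/2+2$. Summing the vertex defects and using these identities, the $\alpha$-dependent terms cancel exactly, and Descartes' total defect of $4\pi$ reduces to $2\pi p+\pi s=4\pi$, i.e.\ the striking $\alpha$-free identity $2p+s=4$. In particular $p+s\le 4$ independently of $n$.

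Finally, from $s+2b+2c=n$ one has $c\le(n-s)/2$, so $K_p+K_s=n-c\ge(n+s)/2\ge n/2$. Distributing this total over the at most four pyramidal or semipyramidal vertices, pigeonhole produces a vertex incident to $k\ge n/8$ apex angles; convexity requires its angular defect to be positive, forcing $k\alpha<2\pi$ (pyramidal) or $(k-1)\alpha<\pi$ (semipyramidal). Either inequality yields $\alpha=O(1/n)$. The only nontrivial step is the cancellation producing $2p+s=4$, which is both pleasing and slightly delicate to verify; once it is in hand the remaining counting and pigeonhole argument is entirely routine.
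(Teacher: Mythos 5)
Your proof is correct, and it reaches the bound by a genuinely different route from the paper's. The paper argues from the majority of vertices: since each face contributes one apex angle and two base angles, and a well-behaved vertex carries at most four base angles, $\Omega(n)$ vertices must be basic or semibasic; Descartes' total defect of $4\pi$ then forces $\Omega(n)$ of these to have defect $O(1/n)$, and since those defects equal $2\alpha$ and $\alpha$ respectively, $\alpha=O(1/n)$ follows immediately. You instead argue from the minority: your identity $2p+s=4$ pins the pyramidal and semipyramidal vertices down to at most four, the apex-angle count $K_p+K_s=n-c\ge n/2$ forces a linear number of apex angles to crowd onto one of them, and the local convexity constraint (angle sum strictly less than $2\pi$ at a single such vertex) gives the bound. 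I verified your cancellation; in fact $2p+s=4$ follows already from Euler's formula ($p+s+b+c=n/2+2$) combined with the base-angle count ($s+2b+2c=n$), without routing through the defect sum at all. This exact identity is a pleasant strengthening of \autoref{lem:few-pyramidal}, which the paper proves separately by a cruder inequality that invokes the present theorem's conclusion; your route thus has the side benefit of decoupling that lemma from the apex-angle bound. One minor point worth stating explicitly in a final write-up: when the congruent faces are equilateral the apex/base labelling is ambiguous, but those realizations are among the finitely many convex deltahedra and can be absorbed into the same asymptotic disclaimer you already use for the non-well-behaved cases.
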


\begin{proof}
By \autoref{thm:nwb-sharp} we may assume without loss of generality that the polyhedron is well-behaved, because the finitely many non-well-behaved possibilities do not affect the asymptotic statement of the theorem.

Because the polyhedron has twice as many base angles as apex angles, and (by the assumption that it is well-behaved) each vertex has $O(1)$ base angles, it follows that there are $\Omega(n)$ vertices that have at least twice as many incident base angles as apex angles. To keep the total angular deficit equal to $4\pi$, $\Omega(n)$ of these vertices must have deficit $O(1/n)$. This rules out the semipyramidal vertices with one incident apex angle, for sufficiently large $n$, because they have deficit exactly $\pi$. The only remaining possibilities are the basic and semibasic vertices.  If the base angle is $\theta$ and the apex angle is $\pi-2\theta$, then a basic vertex has angular deficit $2\pi-4\theta$ and a semibasic vertex has angular deficit $\pi-2\theta$. This can only be $O(1/n)$ if $\theta=\pi/2-O(1/n)$, for which the apex angle is $O(1/n)$.
\end{proof}

\subsection{Graph-theoretic properties}

The infinite families of monohedral isosceles convex polyhedra listed at the start of this section
all have many basic or semi-basic vertices, but few pyramidal or semipyramidal vertices. This can be formalized and proven more generally: 

\begin{lemma}
\label{lem:few-pyramidal}
In a well-behaved monohedral isosceles convex polyhedron, there is a nonzero but bounded number of pyramidal or semipyramidal vertices.
\end{lemma}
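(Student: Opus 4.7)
The plan is to reduce the claim to a simple integer equation by combining Descartes' theorem on total angular defect with a careful count of face-angle incidences. Let $\alpha$ be the common apex angle, so the base angle is $\beta=(\pi-\alpha)/2$, and let $n$ denote the number of faces. Write $B$, $S$, $P$, $SP$ for the total numbers of basic, semibasic, pyramidal, and semipyramidal vertices; for the latter two types, further break the counts by the number $k$ of incident apex angles into $P_k$ and $SP_k$, and set $K_P=\sum_k kP_k$ and $K_{SP}=\sum_k kSP_k$.

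First I would record two incidence identities. Each face contributes exactly one apex and two base angles, so the apex count gives $S+K_P+K_{SP}=n$ and the base count gives $4B+4S+2\,SP=2n$, i.e., $2(B+S)+SP=n$. Next I would compute the angular defect at each vertex type directly from its definition: a basic vertex has defect $2\pi-4\beta=2\alpha$, a semibasic vertex has defect $2\pi-4\beta-\alpha=\alpha$, a pyramidal vertex with $k$ apex angles has defect $2\pi-k\alpha$, and a semipyramidal vertex with $k$ apex angles has defect $2\pi-2\beta-k\alpha=\pi-(k-1)\alpha$. Summing over all vertices and setting the total equal to $4\pi$ by Descartes' theorem produces
\[
2\alpha B+\alpha S+2\pi P-\alpha K_P+\pi\,SP-\alpha(K_{SP}-SP)=4\pi.
\]
Now I would substitute the incidence identities: use $K_P+K_{SP}=n-S$ to collapse the $\alpha K_P,\alpha K_{SP}$ terms, and use $B+S=(n-SP)/2$ to rewrite $2\alpha(B+S)$. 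Every $\alpha$-dependent term, including the dominant $\alpha n$ coming from the apex count, cancels, leaving the clean identity
\[
2\pi P+\pi\,SP=4\pi,\qquad\text{equivalently}\qquad 2P+SP=4.
\]

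Since $P$ and $SP$ are non-negative integers, the only solutions are $(P,SP)\in\{(0,4),(1,2),(2,0)\}$, so $P+SP\in\{2,3,4\}$, which is both nonzero and bounded by $4$, as required. The step I expect to require the most care is the cancellation in the Descartes sum: the two incidence identities must be applied in exactly the right combination to see that all $\alpha$-dependent contributions vanish. That cancellation is the whole point of the argument — without it one would at best get a bound depending on $\alpha$ (hence on $n$ via Theorem \ref{thm:wb-sharp}), rather than the absolute constant bound claimed by the lemma.
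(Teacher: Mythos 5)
Your proof is correct, and it takes a genuinely different route from the paper's. The paper argues in two separate steps: nonzero-ness follows from the fact that basic and semibasic vertices consume base angles and apex angles in a ratio exceeding the global $2{:}1$ ratio, and boundedness follows from an angle-sum inequality at the $k$ pyramidal/semipyramidal vertices ($2\pi k-4\pi\le\pi k+O(1)$), which crucially relies on \autoref{thm:wb-sharp} to bound the total apex-angle contribution by $O(1)$. Your argument instead extracts the exact identity $2P+SP=4$ from Descartes' theorem plus the two incidence counts; I checked the cancellation of the $\alpha$-terms and it does go through ($2\alpha B+\alpha S-\alpha K_P-\alpha K_{SP}+\alpha\,SP=\alpha\bigl(2(B+S)+SP\bigr)-\alpha n=0$), and the identity is confirmed by the paper's examples (bipyramids and gyroelongated bipyramids have $P=2$, $SP=0$; biarc hulls have $P=0$, $SP=4$). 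This buys you an explicit bound of at most $4$ such vertices rather than an unspecified $O(1)$, and it removes the dependence on \autoref{thm:wb-sharp} entirely. It is worth noting that your identity is really combinatorial rather than metric: since $V=B+S+P+SP$, the base-angle count gives $B+S=n/2-SP/2$, and Euler's formula $V=n/2+2$ then yields $P+SP/2=2$ directly, with no reference to $\alpha$ at all — so the cancellation you observed is guaranteed in advance. The only mild caveat is that if the common face shape is equilateral the apex/base designation is ambiguous, but that affects the paper's proof equally and is presumably resolved by fixing a designation.
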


\begin{proof}
The number of apex angles is half the number of base angles, but the basic and semi-basic vertices use more than twice as many base angles as apex angles, so the leftover apex angles must be incident to pyramidal or semipyramidal vertices. If there are $k$ pyramidal or semipyramidal vertices, then the base angles incident to these vertices contribute total angle less than $\pi k$ (less than $\pi$ for each pair of base angles incident to a semipyramidal vertex) and the apex angles contribute $O(1)$ (because there are $\le n$ of them, each with angle $O(1/n)$. Therefore, the total angle is at most $\pi k + O(1)$. However, the total angle must be at least $2\pi k-4\pi$ in order to achieve total angular deficit $4\pi$ for the whole polyhedron. Combining these inequalities, $k$ must be $O(1)$.
\end{proof}

We can use this fact to show that the graphs of monohedral isosceles convex polyhedra have small dominating sets. Recall that a dominating set, in a graph, is a subset of the vertices such that every vertex in the graph either belongs to this set or has a neighbor in the set.

\begin{theorem}
If $G$ is a triangulation that can be realized as a monohedral isosceles convex polyhedron, then $G$ has a dominating set of size $O(1)$.
\end{theorem}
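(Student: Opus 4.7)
The plan is to take $S = V_1$, the set of all pyramidal and semipyramidal vertices of the polyhedron, and prove that $S$ is a dominating set of size $O(1)$. By \autoref{thm:nwb-sharp}, there are only finitely many non-well-behaved realizations; each of these is a specific graph of bounded size and therefore has a dominating set of bounded size, so without loss of generality I may assume $G$ is realized as a well-behaved monohedral isosceles convex polyhedron. Then \autoref{lem:few-pyramidal} gives $|S|=O(1)$ directly, so the remaining task is to show that every basic or semi-basic vertex is adjacent to some vertex of $S$.

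To do this, I would first classify the edges of the polyhedron by length. Assuming the faces are not equilateral (the equilateral case gives one of the eight convex deltahedra, each of bounded size and so handled separately), each edge of the polyhedron has length equal either to the base length or to the leg length of the common face shape. Hence each edge is either a \emph{base edge}, serving as the base of both its incident triangles, or a \emph{leg edge}, serving as a leg of both. At any vertex $u$ that is a base vertex of a face $F$, the two edges of $F$ incident to $u$ are the base of $F$ together with exactly one of its legs, so they must be of different types. Consequently, as one walks cyclically around $u$ through the faces having $u$ as a base vertex, the edges separating consecutive such faces strictly alternate between base edges and leg edges.

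Next I apply this to the claim. Let $u$ be basic or semi-basic. The four faces at $u$ with $u$ as a base vertex occupy four consecutive positions in the cyclic face-order at $u$; in the semi-basic case, the single apex face at $u$ fills the one remaining slot, and both of its edges at $u$ are leg edges because both are legs of the apex face. The alternation then forces at least one of the edges separating two consecutive base faces at $u$ to be a leg edge~$e$. Both faces incident to~$e$ have $u$ as a base vertex, so the leg-edge property forces both of them to have their apex at the opposite endpoint~$w$ of~$e$. Thus $w$ has at least two incident apex angles, which by the definitions of the well-behaved vertex types rules out $w$ being basic (zero apex angles) or semi-basic (one apex angle); hence $w \in V_1 = S$, so $u$ is dominated by $S$.

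The main obstacle is simply the local bookkeeping around $u$: verifying the base/leg alternation and, in the semi-basic case, tracking the position of the apex face in the cyclic order so that one is guaranteed to find a leg edge between two base faces. I expect this to be routine given the local constraint that each base-vertex incidence of a face contributes exactly one base-edge incidence and one leg-edge incidence at that vertex.
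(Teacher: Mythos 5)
Your proposal is correct and follows essentially the same route as the paper: the same dominating set (pyramidal and semipyramidal vertices), the same reduction of the non-well-behaved case to bounded size via \autoref{thm:nwb-sharp}, the same appeal to \autoref{lem:few-pyramidal} for the size bound, and the same local argument that a basic or semibasic vertex must have two of its base-angle faces meeting across a shared leg, whose far endpoint carries two apex angles. Your base-edge/leg-edge alternation is just a more explicit rendering of the paper's observation that base angles pair up across shared base edges, so with four base angles and at most one apex angle some pair of base faces must meet across a side instead.
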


\begin{proof}
If the realization is not well-behaved, then $G$ has $O(1)$ vertices, and we may take the entire vertex set as the dominating set. Otherwise, we claim that the pyramidal and semipyramidal vertices form a dominating set, which by \autoref{lem:few-pyramidal} has size $O(1)$. By the assumption that the realization is well-behaved, every vertex is pyramidal or semipyramidal (and belongs to this set), or is basic or semibasic. In a basic or semibasic vertex, there are four incident base angles, and at most one apex angle to separate them, so some two base angles must be adjacent across a shared side of their two faces rather than across a shared base edge. The vertex adjacent to the basic or semibasic vertex at the other end of this shared side has two incident apex angles, so it must be pyramidal or semipyramidal.
\end{proof}

\begin{theorem}
If $G$ is a triangulation that can be realized as a monohedral isosceles convex polyhedron, then $G$ has unweighted graph diameter $O(1)$.
\end{theorem}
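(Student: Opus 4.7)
The plan is to derive the diameter bound directly from the preceding theorem, which furnishes a dominating set $D \subseteq V(G)$ of cardinality $O(1)$. Because every vertex of $G$ lies at graph distance at most $1$ from $D$, the triangle inequality immediately yields $\mathrm{diam}(G) \le 2 + \max_{d, d' \in D} d(d, d')$, so the task reduces to bounding the pairwise distances inside $D$ by a constant. The non-well-behaved case is trivial, since \autoref{thm:nwb-sharp} already forces $|V(G)| = O(1)$.

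For the well-behaved case, fix $d, d' \in D$ and take a shortest $d$-$d'$ path $d = p_0, p_1, \ldots, p_k = d'$. Attach to each vertex $p_i$ a label $\delta(p_i) \in D$, defined as $p_i$ itself when $p_i \in D$ and as any $D$-neighbor of $p_i$ otherwise (guaranteed by the domination property). The core observation is a windowing constraint: if two indices $i, j$ satisfy $\delta(p_i) = \delta(p_j) = d^{*}$, then the detour $p_i \to d^{*} \to p_j$ has length at most $2$, so shortest-path minimality forces $|i - j| \le 2$. Consequently, for each label $d^{*} \in D$, the set of indices carrying that label lies in at most three consecutive integers. The two endpoint labels $d = p_0$ and $d' = p_k$ are even more constrained: if $p_i \sim d$ then shortest-path minimality forces $i \le 1$, confining that label (and symmetrically the label $d'$) to a window of size at most $2$.

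Summing the window sizes over the $|D|$ possible labels yields $k + 1 \le 2 + 2 + 3(|D| - 2) = O(1)$, whence $d(d, d') = O(1)$ and $\mathrm{diam}(G) = O(1)$. The main difficulty here is conceptual rather than technical: the key step is recognizing that the bounded dominating set from the previous theorem is the right hook, after which the remaining argument is a clean piece of graph-theoretic window-counting requiring no further geometric input. The device of letting $\delta$ act as the identity on $D$ uniformly handles the case in which a shortest path happens to pass through other elements of $D$.
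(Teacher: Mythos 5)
Your proof is correct and follows essentially the same route as the paper: both derive the diameter bound from the $O(1)$-size dominating set by showing that along any shortest path (or, in the paper's version, among the BFS layers from a fixed vertex) each dominating vertex can account for only a bounded window, yielding a bound of roughly $3|D|$. The only difference is bookkeeping — you label vertices of a shortest path between two dominators and add $2$ via the triangle inequality, while the paper counts BFS layers — so the two arguments are interchangeable.
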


\begin{proof}
Choose a starting vertex $v$ within the small dominating set for $G$, and partition the vertices into subsets $S_i$ of the vertices at distance $i$ from $v$ (the layers of $G$ in a breadth-first search from $v$). Then at most two consecutive layers can be disjoint from the dominating set, because three or more consecutive layers would leave the vertices in the middle layer without a neighbor in the dominating set. Therefore, the number of layers is at most three times the size of the dominating set.
\end{proof}

\section{Conclusions and open problems}

We have shown that not every triangulated polyhedral graph has a convex isosceles realization, but that a wide family of polyhedral graphs (including all the ones that we use to prove have no convex isosceles realizations) do have non-convex isosceles realizations. It would be of interest to find triangulated polyhedral graphs for which even a non-convex but non-self-crossing isosceles realization is impossible. More ambitiously, we would like to understand the computational complexity of testing whether a (convex or non-convex) isosceles realization exists, and of finding one when it exists, but such a result seems far beyond the methods we have used in this paper.

We have also added to the zoo of known infinite families of polyhedra with congruent isosceles faces. As well as these infinite families, there also some number of sporadic examples of such polyhedra, presumably many more than the eight convex deltahedra (convex polyhedra with congruent equilateral triangle faces). Is it possible to classify all such polyhedra rather than merely finding more examples? Can our graph-theoretic analysis assist in this classification?

The example of Gr\"unbaum's polytope (\autoref{fig:grunbaum}, right) shows that there are additional families of convex polyhedra with edges of two lengths (necessarily, with all faces isosceles or equilateral) than the families of monohedral isosceles polyhedra already listed. Can these families be classified? To what extent can our structural analysis of monohedral isosceles polyhedra be extended to them? Our triangulations that cannot be realized as isosceles convex polyhedra show that, for convex realizations, at least three edge lengths may sometimes be needed. Is the minimum number of edge lengths of convex realizations of polyhedral graphs bounded or unbounded? What about for non-convex realizations?

\section*{Acknowledgements}
We thank Joseph Malkevitch for directing our attention to his \emph{Consortium} paper, and Philip Rideout for the svg3d Python package with which many of the figures were made.

\bibliographystyle{plainurl}
\bibliography{isosceles}
\end{document}